\documentclass[journal]{IEEEtran}
\ifCLASSINFOpdf
\else
   \usepackage{graphicx}
   \graphicspath{{../eps/}}
   \DeclareGraphicsExtensions{.eps}
\fi

\usepackage{amsmath}
\usepackage{amsfonts,amssymb}
\usepackage{amssymb}
\usepackage{wrapfig}
\usepackage{psfrag}
\usepackage{epstopdf}
\usepackage{cite}
\usepackage{graphicx}
\usepackage{subfigure}
\usepackage{threeparttable}
\usepackage{cases}
\usepackage{subeqnarray}
\usepackage{color}
\usepackage{underscore}
\usepackage{verbatim}
\usepackage{bm}
\usepackage{stfloats}
\usepackage{xpatch}
\usepackage{makecell}

\newtheorem{lemma}{Lemma}

\newtheorem{algorithm}{Algorithm}
\newtheorem{proposition}{Proposition}

\usepackage{algorithm}
\usepackage{algorithmic}
\usepackage[table]{xcolor}

\begin{document}
\title{UAV Swarm Enabled Aerial Movable Antenna System for Low-Altitude Economy: From\\ Far-Field to Near-Field Communication
}
%
%
%
\author{Haiquan~Lu,
        Chao~Feng,
        Yong~Zeng,~\IEEEmembership{Fellow,~IEEE,}
        Shaodan~Ma,~\IEEEmembership{Senior Member,~IEEE,}\\
        Long~Shi,~\IEEEmembership{Senior Member,~IEEE,}
        Shi~Jin,~\IEEEmembership{Fellow,~IEEE,}
        and
        Rui~Zhang,~\IEEEmembership{Fellow,~IEEE}
\thanks{Haiquan Lu is with the School of Electronic and Optical Engineering, Nanjing University of Science and Technology, Nanjing 210094, China. He was with the State Key Laboratory of Internet of Things for Smart City and the Department of Electrical and Computer Engineering, University of Macau, Macao SAR, China, and also with the National Mobile Communications Research Laboratory, Southeast University, Nanjing 210096, China (e-mail: haiquanlu@njust.edu.cn).}
\thanks{Chao Feng is with the Department of Information Engineering, the Chinese University of Hong Kong, Hong Kong (e-mail: fc025@ie.cuhk.edu.hk).}
\thanks{Yong Zeng is with the National Mobile Communications Research Laboratory, Southeast University, Nanjing 210096, China, and is also with the Purple Mountain Laboratories, Nanjing 211111, China (e-mail: yong_zeng@seu.edu.cn). }
\thanks{Shaodan Ma is with the State Key Laboratory of Internet of Things for Smart City and the Department of Electrical and Computer Engineering, University of Macau, Macao SAR, China (e-mail: shaodanma@um.edu.mo).}
\thanks{Long Shi is with the School of Electronic and Optical Engineering, Nanjing University of Science and Technology, Nanjing 210094, China (e-mail: longshi@njust.edu.cn).}
\thanks{Shi Jin is with the School of information Science and Engineering, Southeast University, Nanjing 210096, China (e-mail: jinshi@seu.edu.cn).} 
\thanks{Rui Zhang is with the Department of Electrical and Computer Engineering, National University of Singapore, Singapore 117583 (e-mail: elezhang@nus.edu.sg).
}
}

\maketitle

 \begin{abstract}
 Unmanned aerial vehicle (UAV) with the intrinsic three-dimensional (3D) mobility provides an ideal platform for implementing aerial movable antenna (AMA) system enabled by UAV swarm cooperation. Besides, AMA system is readily to achieve an extremely large-scale array aperture, rendering the conventional far-field uniform plane wave (UPW) model no longer valid for aerial-to-ground links. This paper studies the UAV swarm enabled near-field AMA communication, by taking into account the non-uniform spherical wave (NUSW) model, where UAV swarm trajectory simultaneously influences the channel amplitude and phase. We formulate a general optimization problem to maximize the minimum average communication rate over user equipments (UEs), by jointly optimizing the 3D UAV swarm trajectory and receive beamforming for all UEs. To draw useful insights, the special case of single UE is first studied, and successive convex approximation (SCA) technique is proposed to efficiently optimize the UAV swarm trajectory. For the special case of placement optimization, the optimal  placement positions of UAVs for cases of single UAV and two UAVs are derived in closed-form. Then, for the special case of two UEs, we show that an inter-UE interference (IUI)-free communication can be achieved by symmetrically placing an even number of UAVs along a hyperbola, with its foci corresponding to the locations of the two UEs. Furthermore, for arbitrary number of UEs, an alternating optimization algorithm is proposed to efficiently tackle the non-convex optimization problem. Numerical results validate the significant performance gains over the benchmark schemes.
 \end{abstract}

\begin{IEEEkeywords}
 Low-altitude economy, UAV swarm, near-field, aerial movable antenna (AMA), swarm trajectory optimization.

\end{IEEEkeywords}

\IEEEpeerreviewmaketitle
\section{Introduction}
 Multiple-input multiple-output (MIMO) has achieved great success in the fourth-generation (4G) mobile networks, owing to its improved diversity and multiplexing gains over single-input single-output (SISO) systems. In current fifth-generation (5G) networks, MIMO has been evolved into massive MIMO, which is anticipated to be further advanced to extremely large-scale MIMO (XL-MIMO) in the forthcoming sixth-generation (6G) era. In XL-MIMO systems, the number of antennas at the base station (BS) is drastically scaled up to hundreds or even thousands \cite{tong20216g,lu2024tutorial}, thus enabling unprecedented improvements in spectral efficiency and spatial resolution. Nevertheless, such improvements may come at the expense of significantly increased hardware cost and energy consumption, if each antenna entails a dedicated radio-frequency (RF) chain in digital beamforming \cite{heath2018foundations}. To address the above issues, several approaches have been proposed, such as activating less number of antennas with favorable channel conditions via antenna selection (AS), pre-configuring antennas in a sparse manner for achieving a larger physical and/or virtual aperture, i.e., sparse array \cite{wang2024enhancing,li2025sparse}, as well as adopting the hybrid beamforming architecture based on phase shifters \cite{el2014spatially,sohrabi2016hybrid}, lens arrays, and switches. Moreover, a novel cost-effective array architecture termed ray antenna array (RAA) was proposed in \cite{dong2025novel}, which contains multiple ray-like simple uniform linear arrays (sULAs) with properly designed orientations, providing enhanced beamforming gains without the need of phase shifters.

 However, the aforementioned antenna arrays typically employ the fixed-position antennas (FPAs). Recently, movable antenna (MA) \cite{zhu2024movable,zhu2025tutorial} or fluid antenna system (FAS) \cite{wong2021fluid} emerges as a promising technique to unlock the potential in antenna reconfigurability, by endowing the antenna with the capability of free movement through mechanical or electronic control. The similar idea of adjusting antenna positions can also be found in pinching-antenna systems \cite{ding2025flexible}. In particular, favorable channel conditions can be achieved by flexibly adjusting antenna positions, thus facilitating communication performance in terms of spectral efficiency and outage probability, as well as sensing performance with respect to (w.r.t.) sensing signal-to-noise ratio (SNR) and Cram\'{e}r-Rao bound (CRB) \cite{zhu2025tutorial}, without increasing the number of antennas. Furthermore, MA was generalized to six-dimensional movable antenna (6DMA) architecture in \cite{shao20256dModeling}, which holistically incorporates the design degrees of freedom (DoFs) of three-dimensional (3D) positions and 3D rotations. Such an ultimate antenna reconfigurability facilitates a better exploitation of channel spatial variations, and an enhanced system performance can be achieved compared to existing MA systems, especially when utilizing directional antennas \cite{shao20256ddiscrete,shao2025distributed}.

 It is worth noting that in terms of flexible movement, unmanned aerial vehicle (UAV) provides an ideal platform for 3D mobility, which has found assorted applications in low-altitude economy \cite{song2025overview,jiang2025integrated,he2025ubiquitous}, including integrated sensing and communication (ISAC), logistics and delivery, infrastructure inspection, emergency rescue, as well as agricultural and forestry plant protection, thanks to its inherent benefits of 3D maneuverability, flexible deployment, cost-effectiveness, and high line-of-sight (LoS) probability \cite{zeng2019accessing,mozaffari2019tutorial}.
 On one hand, UAV-assisted communication extends the conventional two-dimensional (2D) terrestrial networks to 3D domain, enabling flexible communication services from the sky with UAVs functioning as aerial BSs, relays, and access points (APs). On the other hand, similar to terrestrial user equipments (UEs), UAVs can also function as aerial UEs that access cellular networks, giving rise to a new paradigm termed cellular-connected UAVs \cite{zeng2019accessing}. In this context, extensive research efforts have been devoted to UAV communications in terms of various performance metrics, such as energy efficiency \cite{zeng2017energy,zeng2019energy}, achievable rate \cite{wu2018joint,wang2024learning}, outage probability \cite{mozaffari2016unmanned,lyu2019network,zeng2021simultaneous}, mission completion time \cite{zhang2019cellular,xu2021completion}, and secure rate \cite{zhong2018secure,wen2024ris}. 

 By leveraging the design DoF of UAV placement/trajectory, there are some preliminary efforts devoted to the integration of UAV and MA \cite{liu2025uav,ren20256d,kuang2024movable,zhou2025movable}. For example, the authors in \cite{liu2025uav} proposed mounting the MA array on a single UAV, so as to maximize the sum rate in multi-user communication system. In \cite{ren20256d}, the 6DMA was proposed to be mounted on a single cellular-connected UAV for enhanced co-channel interference mitigation. In \cite{kuang2024movable}, a UAV equipped with the MA array was applied in an ISAC system to maximize the achievable communication rate, while satisfying the sensing requirement. Note that existing works mainly concentrate on UAV-mounted MA arrays, whereas the inherent 3D mobility of UAVs render them well-suited for constructing MA systems. Specifically, by equipping each UAV with either a single antenna or multiple antennas, a MA system is enabled through the synergistic cooperation of the UAV swarm. Thus, flexible array architectures can be realized by adjusting the UAV positions, obviating the hardware components required for mechanical or electronic control in conventional MA systems. Moreover, compared to the conventional MA system, UAV swarm enabled aerial MA (AMA) system can achieve a larger-scale spatial movement for exploiting channel variations, and exhibit an enhanced scalability thanks to the benefit of flexible deployment. In \cite{lu2025UAVMA}, a novel UAV swarm enabled two-level AMA system was proposed, where each UAV is equipped with an individual MA array, and the UAV swarm further collaboratively forms a large-scale MA system, thus enabling a dual-scale antenna movement for enhancing the UAV-terrestrial communication performance. However, the work \cite{lu2025UAVMA} neglected the near-field impact of UAV placement variation on the channel amplitude, and did not leverage the benefits of UAV swarm trajectory optimization.

 Furthermore, UAV swarm enabled AMA system can readily realize an array aperture comparable to or even larger than that of XL-MIMO, under which UEs and/or scatterers are highly likely to be located in its near-field region. As such, the more general near-field non-uniform spherical wave (NUSW) model should be taken into account for the aerial-to-ground channels \cite{lu2024tutorial}, rather than the conventional far-field uniform plane wave (UPW) model. Specifically, with the NUSW model, the angles of arrival/departure (AoAs/AoDs) for signals between UE and BS antennas are no longer identical, and each antenna in general experiences different signal amplitude from any UE, which has spurred much research interest in near-field performance analysis and system optimization design \cite{lu2024tutorial,Han2023Towards}. Thus, the UAV swarm placement/trajectory influences not only the channel amplitude, but also the channel phase of aerial-to-ground links, enabling the simultaneous adjustment of both the large-scale path loss and small-scale fading. This differs from existing MA systems that only adapt to small-scale channel variations. In this paper, we consider a UAV swarm enabled near-field AMA communication system, where the UAV swarm cooperatively serves multiple UEs. Our main contributions are summarized as follows.
 \begin{itemize}[\IEEEsetlabelwidth{12)}]
 \item First, by considering the general NUSW model, we present the near-field channel from each UE to the UAV swarm enabled AMA system, which characterizes the impact of UAV swarm trajectory on both the channel amplitude and phase of aerial-to-ground links. Then, under the practical constraints on maximum speed and collision avoidance among UAVs, as well as initial and final positions constraints, a general optimization problem is formulated to maximize the minimum average communication rate over all the UEs, by jointly optimizing the 3D UAV swarm trajectory and receive beamforming for all the UEs at the UAV swarm.
 \item Second, to draw useful insights, we first consider the special case of single UE, where the UAV swarm trajectory can be efficiently optimized with the successive convex approximation (SCA) technique \cite{zeng2019accessing}. When the number of time slots equals one or is sufficiently large such that the intermediate movement time slots are negligible, the optimal placement positions of UAVs for cases of single UAV and two UAVs with equal altitudes are further derived in closed-form. Moreover, for the special case of two UEs, a UAV swarm placement scheme that achieves the inter-UE interference (IUI)-free communication is derived for even number of UAVs. The result shows that the UAVs should be symmetrically placed on a hyperbola, with the foci corresponding to the locations of two UEs.
 \item Last, for arbitrary number of UEs, an alternating optimization algorithm is proposed to tackle the non-convex optimization problem sub-optimally. By exploiting the fact that the channel amplitude is less sensitive to UAV swarm trajectory variation than the channel phase, we first obtain an initial UAV swarm trajectory by focusing on the channel amplitude, which aims to maximize the minimum average SNR over UEs. Then, with the channel amplitude fixed by the obtained trajectory, the receive beamforming and UAV swarm trajectory are alternately optimized. Numerical results are provided to show the significant performance gain of the proposed scheme over the benchmark schemes.
 \end{itemize}

 The rest of the paper is organized as follows. Section~\ref{sectionSystemModel} presents the system model of UAV swarm enabled near-field AMA communication, and formulates the problem for maximizing the minimum average communication rate over UEs. Section~\ref{sectionSingleTWOUE} studies the special cases of single UE and two UEs, respectively. Section~\ref{sectionArbitraryUE} generalizes the results to arbitrary number of UEs. Section~\ref{sectionNumericalResult} provides the numerical results to evaluate the performance of UAV swarm enabled near-field AMA system. Finally, Section~\ref{sectionConclusion} concludes the paper.

 \emph{Notations:} Scalars are denoted by italic letters. Vectors and matrices are denoted by bold-face lower- and upper-case letters, respectively. ${{\mathbb{C}}^{M \times N}}$ denotes the space of $M \times N$ complex-valued matrices. $\left\| {\bf{x}} \right\|$ denotes the Euclidean norm of a vector ${\bf{x}}$. For an arbitrary-size matrix ${\bf A}$, its complex conjugate, transpose, and Hermitian transpose are denoted by ${\bf A}^*$, ${\bf A}^T$, ${\bf A}^H$, respectively. The spectral norm and Frobenius norm of ${\bf A}$ are denoted as ${\left\| {\bf{A}} \right\|_2}$ and ${\left\| {\bf{A}} \right\|_F}$, respectively. The distribution of a circularly symmetric complex Gaussian (CSCG) random vector with mean $\bf{x}$ and covariance matrix $\bf{\Sigma}$ is denoted as ${\cal CN}\left( {\bf{x},\bf{{\Sigma}}} \right)$, and $\sim$ stands for ``distributed as". The symbol ${\rm j}$ denotes the imaginary unit of complex numbers, with ${{\rm j}^2} =  - 1$. For a complex-valued number $x$, ${\mathop{\rm Re}\nolimits} \left\{ x \right\}$ denotes its real part. 
 ${\cal O}\left({\cdot}\right)$ denotes the standard big-O notation.
\section{System Model And Problem Formulation}\label{sectionSystemModel}
 \begin{figure}[!t]
 \centering
 \centerline{\includegraphics[width=3.5in,height=2.4in]{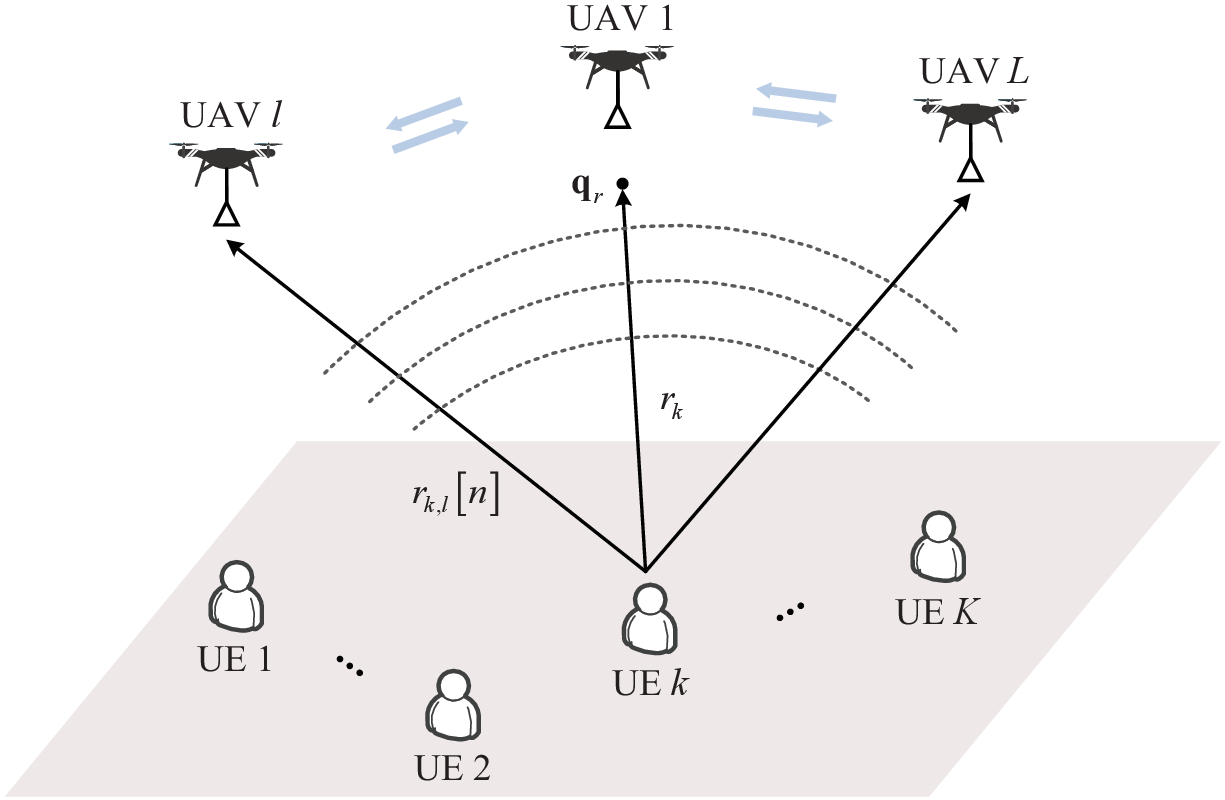}}
 \caption{A low-altitude UAV swarm enabled near-field AMA communication system.}
 \label{fig:systemModel}
 \end{figure}

 As shown in Fig.~\ref{fig:systemModel}, we consider a near-field wireless communication system enabled by a low-altitude UAV swarm, which consists of $L$ single-antenna UAVs and $K$ single-antenna ground UEs. With a 3D Cartesian coordinate system, the location of ground UE $k$, $k \in {\cal K}\triangleq \left\{{1, \cdots, K}\right\}$, is denoted as ${{\bf{w}}_k} = {\left[ {{x_k},{y_k},0} \right]^T}$.
 Let ${{\bf{q}}_r} = {\left[ {{x_r},{y_r},{z_r}} \right]^T}$ denote the reference point for the $L$ UAVs, and the distance between ${{\bf{q}}_r}$ and ${{\bf{w}}_k}$ is ${r_k} = \left\| {{{\bf{q}}_r} - {{\bf{w}}_k}} \right\|$. The 3D trajectory of UAV $l$ is denoted as ${{\bf{q}}_l}\left( t \right) = {\left[ {{x_l}\left( t \right),{y_l}\left( t \right),{z_l}\left( t \right)   } \right]^T}$, where $ 0 \le t \le T$, with $T$ being the operation time horizon. Moreover, by discretizing $T$ into $N$ time slots, i.e., $T = N{\delta _t}$, where $\delta _t$ is a sufficiently small time slot length, the position of UAV $l$ at time slot $n$ is ${{\bf{q}}_{l}}\left[ n \right] = {\left[ {{x_{l}}\left[ n \right],{y_{l}}\left[ n \right],{z_{l}}\left[ n \right]} \right]^T} $, $ n \in {\cal N}$, with ${\cal N} \triangleq \left[1, \cdots, N\right]$. In practice, there exist maximum speed and collision avoidance constraints for UAV swarm \cite{zeng2019accessing}, which are respectively given by
 \begin{equation}\label{speedConstraint}
 \left\| {{{\bf{q}}_l}\left[ {n + 1} \right] - {{\bf{q}}_l}\left[ n \right]} \right\| \le {{\tilde V}_{\max }},\ \forall l\ {\rm and}\  n = 1, \cdots, N-1,
 \end{equation}
 \begin{equation}\label{collisionAvoidanceConstraint}
 \left\| {{{\bf{q}}_l}\left[ n \right] - {{\bf{q}}_{l'}}\left[ n \right]} \right\| \ge {d_{\min }},\ \forall n,l, l' \ne l,
 \end{equation}
 where ${{\tilde V}_{\max }} \triangleq {V_{\max }}{\delta _t}$, with $V_{\max}$ being the maximum UAV speed in meter/second (m/s), and $d_{\min}$ denotes the minimum distance to avoid the collision between UAVs. Moreover, the distance between UE $k$ and UAV $l$ at slot $n$ is
 \begin{equation}\label{distanceUEkUAVl}
 \begin{aligned}
 {r_{k,l}}\left[ n \right]& = \left\| {{{\bf{q}}_l}\left[ n \right] - {{\bf{w}}_k}} \right\|\\
 & = \sqrt {{{\left( {{x_l}\left[ n \right] - {x_k}} \right)}^2} + {{\left( {{y_l}\left[ n \right] - {y_k}} \right)}^2} + {{\left( {{z_l}\left[ n \right] - {z_k}} \right)}^2}}.
 \end{aligned}
 \end{equation}

 The $L$ UAVs serve $K$ UEs in a cooperative manner, and a flexible array architecture can be enabled by dynamically altering the positions of UAVs. When the movable region is large, the ground UEs are less likely to locate in the far-field region of the UAV swarm enabled AMA system. Thus, the more general near-field NUSW model needs to be considered. Denote by ${\beta _0} \triangleq {\left( {\frac{\lambda }{{4\pi }}} \right)^2}$ the channel power gain at the reference distance of ${r_0} = 1$ m, where $\lambda$ denotes the signal wavelength.
 For the LoS-dominating UAV-ground link, the near-field channel from UE $k$ to the UAV swarm enabled AMA system at slot $n$ is \cite{lu2024tutorial}
 \begin{equation}\label{nearFieldChannelUEk}
 \begin{aligned}
 &{{\bf{h}}_k}\left( {\left\{ {{{\bf{q}}_l}\left[ n \right]} \right\}} \right) = {\alpha _k}{{\bf{a}}_k}\left( {\left\{ {{{\bf{q}}_l}\left[ n \right]} \right\}} \right) = {\alpha _k}\times\\
 &{\left[ {\frac{{{r_k}}}{{{r_{k,1}}\left[ n \right]}}{e^{ - {\rm{j}}\frac{{2\pi }}{\lambda }\left( {{r_{k,1}}\left[ n \right] - {r_k}} \right)}}, \cdots ,\frac{{{r_k}}}{{{r_{k,L}}\left[ n \right]}}{e^{ - {\rm{j}}\frac{{2\pi }}{\lambda }\left( {{r_{k,L}}\left[ n \right] - {r_k}} \right)}}} \right]^T},
 \end{aligned}
 \end{equation}
 where ${\alpha _k} \triangleq \frac{{\sqrt {{\beta _0}} }}{{{r_k}}}{e^{ - {\rm{j}}\frac{{2\pi }}{\lambda }{r_k}}}$ denotes the complex-valued channel coefficient for the reference link distance $r_k$, and ${{\bf{a}}_k}\left( {\left\{ {{{\bf{q}}_l}\left[ n \right]} \right\}} \right) \in {\mathbb C}^{L \times 1}$ denotes the near-field NUSW array response vector. It is observed that the positions of the UAV swarm impact not only the channel amplitude, but also the channel phase. For simplicity, ${{\bf{h}}_k}\left[ n \right]$ and ${{\bf{a}}_k}\left[ n \right]$ are used to denote ${{\bf{h}}_k}\left( {\left\{ {{{\bf{q}}_l}\left[ n \right]} \right\}} \right)$ and ${{\bf{a}}_k}\left( {\left\{ {{{\bf{q}}_l}\left[ n \right]} \right\}} \right)$, respectively.

 Let ${s_k}\left[ {n,u} \right]$ represent the $u$-th independent and identically distributed (i.i.d.) information-bearing symbol of UE $k$ at slot $n$, with ${s_k}\left[ {n,u} \right] \sim {\cal CN}\left( {0,1} \right)$. By considering uplink communication and applying linear beamforming ${{\bf{v}}_k}\left[ n \right] \in {\mathbb C}^{L \times 1}$, with $\left\| {{{\bf{v}}_k}\left[ n \right]} \right\| = 1$, the resulting signal of UAV swarm enabled AMA system for UE $k$ at slot $n$ is
 \begin{equation}\label{receivedSignalUEk}
 \begin{aligned}
 &{y_k}\left[ {n,u} \right] = {\bf{v}}_k^H\left[ n \right]\sqrt {{P_k}} {{\bf{h}}_k}\left[ n \right]{s_k}\left[ {n,u} \right] + \\
 &\ \ \ \ \ {\bf{v}}_k^H\left[ n \right]\sum\limits_{i = 1,i \ne k}^K {\sqrt {{P_i}} {{\bf{h}}_i}\left[ n \right]{s_i}\left[ {n,u} \right]}  + {\bf{v}}_k^H\left[ n \right]{\bf{z}}\left[ {n,u} \right],
 \end{aligned}
 \end{equation}
 where ${P_i}$ represents the transmit power of UE $i$, and ${\bf{z}}\left[ {n,u} \right] \sim {\cal CN}( {{\bf{0}},{\sigma ^2}{{\bf{I}}_L}} )$ denotes the additive white Gaussian noise (AWGN) with zero mean and covariance matrix ${\sigma ^2}{{\bf{I}}_L}$. Let ${{\bar P}_i} \triangleq {P_i}/{\sigma ^2}$, $\forall i$, the resulting signal-to-interference-plus-noise ratio (SINR) for UE $k$ at slot $n$ can be expressed as
 \begin{equation}\label{SINRUEk}
 {\gamma _k}\left[ n \right] = \frac{{{{\bar P}_k}{{\left| {{\bf{v}}_k^H\left[ n \right]{{\bf{h}}_k}\left[ n \right]} \right|}^2}}}{{\sum\limits_{i = 1,i \ne k}^K {{{\bar P}_i}{{\left| {{\bf{v}}_k^H\left[ n \right]{{\bf{h}}_i}\left[ n \right]} \right|}^2}}  + 1}}.
 \end{equation}

 The achievable rate for UE $k$ at slot $n$ is
 \begin{equation}\label{achievableRateUEk}
 {R_k}\left[ n \right] = {\log _2}\left( {1 + {\gamma _k}\left[ n \right]} \right).
 \end{equation}
 Then, the average achievable rate for UE $k$ over the $N$ slots is given by
 \begin{equation}\label{averageAchievableRateUEk}
 {{\bar R}_k} = \frac{1}{N}\sum\limits_{n = 1}^N {{R_k}\left[ n \right]}.
 \end{equation}

 In this paper, we aim to jointly optimize the 3D UAV swarm formation/trajectory ${\bf{Q}} \triangleq \left\{ {{{\bf{q}}_l}\left[ n \right],\forall l,n} \right\}$ and receive beamforming for all UEs ${\bf{V}} \triangleq \left\{ {{{\bf{v}}_k}\left[ n \right],\forall k,n} \right\}$, to maximize the minimum average communication rate over the $K$ UEs. The optimization problem can be formulated as (by dropping the constant term)
 \begin{align}\label{OptimizationProblem}
 \left( {\rm{P1}} \right)\ \ &\mathop {\max }\limits_{{\bf{Q}}, {\bf{V}}}\ \mathop {\min }\limits_{k \in {\cal K}} \ \sum\limits_{n = 1}^N {{R_k}\left[ n \right]}\\
 {\rm{s.t.}}&\ \left\| {{{\bf{q}}_l}\left[ {n + 1} \right] - {{\bf{q}}_l}\left[ n \right]} \right\| \le {{\tilde V}_{\max }},\notag \\
 &\ \ \ \ \ \ \ \ \ \ \ \ \ \ \ \ \ \ \ \ \forall l\ {\rm and}\  n = 1, \cdots, N-1, \tag{\ref{OptimizationProblem}a}\\
 &\ {\left\| {{{\bf{q}}_l}\left[ n \right] - {{\bf{q}}_{l'}}\left[ n \right]} \right\|^2} \ge d_{\min }^2,\ \forall l, l' \ne l, n,\tag{\ref{OptimizationProblem}b}\\
 &\ {{\bf{q}}_l}\left[ 1 \right] = {{\bf{q}}_{I,l}},\ {{\bf{q}}_l}\left[ N \right] = {{\bf{q}}_{F,l}},\ \forall l,\tag{\ref{OptimizationProblem}c}\\
 &\ {z_l}\left[ n \right] \ge H,\ \forall l,n, \tag{\ref{OptimizationProblem}d}\\
 &\ \left\| {{{\bf{v}}_k}\left[ n \right]} \right\| = 1,\ \forall k,n, \tag{\ref{OptimizationProblem}e}
 \end{align}
 where ${{\bf{q}}_{I,l}}$ and ${{\bf{q}}_{F,l}}$ denote the initial and final positions of UAV $l$, respectively, and $H$ denotes the minimum altitude required to establish the LoS aerial-ground link. Note that problem (P1) is difficult to be solved directly due to the following two reasons. First, the optimization variables of UAV swarm trajectory and receive beamforming are coupled in the objective function, and the constraint (\ref{OptimizationProblem}b) is non-convex. Second, UAV swarm trajectory impacts both the channel amplitude and phase in complicated manners, rendering it challenging to be optimized.


\section{Special Cases of Single UE and TWO UEs}\label{sectionSingleTWOUE}
 In this section, we study the special cases of one single UE and two UEs, respectively. For ease of exposition, the constraints of initial and final positions are temporarily removed.

\subsection{Single UE}
 In this subsection, we consider the case of single UE. For ease of exposition, we assume that the location of UE is ${\bf{w}} = {\left[ {0,0,0} \right]^T}$, where the UE index is omitted for brevity.

 For the case of single UE, the IUI in \eqref{SINRUEk} is absent. With the optimal maximal-ratio combining (MRC) beamforming, the resulting SNR at slot $n$ is given by
 \begin{equation}\label{resultingSNR}
 \gamma \left[ n \right] = \bar P{\left\| {{\bf{h}}\left[ n \right]} \right\|^2} = \sum\limits_{l = 1}^L {\frac{{\bar P{\beta _0}}}{{r_l^2\left[ n \right]}}}  = \sum\limits_{l = 1}^L {\frac{{\bar P{\beta _0}}}{{{{\left\| {{{\bf{q}}_l}\left[ n \right] - {\bf{w}}} \right\|}^2}}}}.
 \end{equation}
 Then, problem (P1) is reduced to
 \begin{align}\label{OptimizationProblemSingleUESCA}
 \mathop {\max }\limits_{{\bf{Q}} } &\ \sum\limits_{n = 1}^N {{{\log }_2}\left( {1 + \sum\limits_{l = 1}^L {\frac{{\bar P{\beta _0}}}{{{{\left\| {{{\bf{q}}_l}\left[ n \right] - {\bf{w}}} \right\|}^2}}}} } \right)}\\
 {\rm{s.t.}}&\ \left\| {{{\bf{q}}_l}\left[ {n + 1} \right] - {{\bf{q}}_l}\left[ n \right]} \right\| \le {{\tilde V}_{\max }},\ \forall l\ {\rm and}\  n = 1, \cdots, N-1, \tag{\ref{OptimizationProblemSingleUESCA}a}\\
 &\ {\left\| {{{\bf{q}}_l}\left[ n \right] - {{\bf{q}}_{l'}}\left[ n \right]} \right\|^2} \ge d_{\min }^2,\ \forall l, l' \ne l, n.\tag{\ref{OptimizationProblemSingleUESCA}b}\\
 &\ {z_l}\left[ n \right] \ge H,\ \forall l,n.\tag{\ref{OptimizationProblemSingleUESCA}c}
 \end{align}

 A closer look at \eqref{OptimizationProblemSingleUESCA} shows that although the objective function is neither convex nor concave w.r.t. ${{{\bf{q}}_l}\left[ n \right]}$, it is convex w.r.t. ${{{\left\| {{{\bf{q}}_l}\left[ n \right] - {\bf{w}}} \right\|}^2}}$. Thus, the SCA technique \cite{zeng2019accessing} is applied to solve problem \eqref{OptimizationProblemSingleUESCA}. Specifically, by exploiting the property that a convex differentiable function is globally lower-bounded by its first-order Taylor approximation \cite{zeng2019accessing}, for given ${{{{\bf{\tilde q}}}_l}\left[ n \right]}$, we have
 \begin{equation}\label{collisionAvoidanceTaylor}
 \begin{aligned}
 R\left[ n \right] &\ge {R_{{\rm{lb}}}}\left[ n \right] \triangleq {\log _2}\left( {1 + \sum\limits_{l = 1}^L {\frac{{\bar P{\beta _0}}}{{{{\left\| {{{{\bf{\tilde q}}}_l}\left[ n \right] - {\bf{w}}} \right\|}^2}}}} } \right) - \\
 &\sum\limits_{l = 1}^L {\frac{{\frac{{\bar P{\beta _0}}}{{{{\left\| {{{{\bf{\tilde q}}}_l}\left[ n \right] - {\bf{w}}} \right\|}^4}}}}}{{\left( {1 + \sum\limits_{i = 1}^L {\frac{{\bar P{\beta _0}}}{{{{\left\| {{{{\bf{\tilde q}}}_i}\left[ n \right] - {\bf{w}}} \right\|}^2}}}} } \right)\ln 2}}}  \times \\
 &\ \ \ \ \ \ \ \ \left( {{{\left\| {{{\bf{q}}_l}\left[ n \right] - {\bf{w}}} \right\|}^2} - {{\left\| {{{{\bf{\tilde q}}}_l}\left[ n \right] - {\bf{w}}} \right\|}^2}} \right).
 \end{aligned}
 \end{equation}
 Moreover, the left-hand-side (LHS) of (\ref{OptimizationProblemSingleUESCA}b) is a convex function, we have the following inequality for given ${{{\bf{\tilde q}}}_l}\left[ n \right]$ and ${{{\bf{\tilde q}}}_{l'}}\left[ n \right]$
 \begin{equation}\label{interUAVDistancelb1}
 \begin{aligned}
 &{\left\| {{{\bf{q}}_l}\left[ n \right] - {{\bf{q}}_{l'}}\left[ n \right]} \right\|^2} \ge {\left[ {{{\left\| {{{\bf{q}}_l}\left[ n \right] - {{\bf{q}}_{l'}}\left[ n \right]} \right\|}^2}} \right]_{{\rm{lb}}}} \triangleq \\
 &\ \ \ 2{\left( {{{{\bf{\tilde q}}}_l}\left[ n \right] - {{{\bf{\tilde q}}}_{l'}}\left[ n \right]} \right)^T}\left( {{{\bf{q}}_l}\left[ n \right] - {{\bf{q}}_{l'}}\left[ n \right]} \right) - {\left\| {{{{\bf{\tilde q}}}_l}\left[ n \right] - {{{\bf{\tilde q}}}_{l'}}\left[ n \right]} \right\|^2}.
 \end{aligned}
 \end{equation}

 As a result, the objective function value of problem \eqref{OptimizationProblemSingleUESCA} is lower-bounded by the optimal value of the following problem
 \begin{equation}\label{OptimizationProblemSingleUESCAlb}
 \begin{aligned}
 \mathop {\max }\limits_{{\bf{Q}} } &\ \sum\limits_{n = 1}^N {R_{{\rm{lb}}}}\left[ n \right]\\
 {\rm{s.t.}}&\ \left\| {{{\bf{q}}_l}\left[ {n + 1} \right] - {{\bf{q}}_l}\left[ n \right]} \right\| \le {{\tilde V}_{\max }}, \forall l\ {\rm and}\  n = 1, \cdots, N-1,\\
 &\ {\left[ {{{\left\| {{{\bf{q}}_l}\left[ n \right] - {{\bf{q}}_{l'}}\left[ n \right]} \right\|}^2}} \right]_{{\rm{lb}}}} \ge d_{\min }^2,\ \forall l, l' \ne l, n,\\
 &\ {z_l}\left[ n \right] \ge H,\ \forall l,n.
 \end{aligned}
 \end{equation}
 Problem \eqref{OptimizationProblemSingleUESCAlb} is a convex optimization problem, and thus the standard convex optimization tools, e.g., CVX, can be applied to solve it. The complexity for solving \eqref{OptimizationProblemSingleUESCAlb} is ${\cal O}(I{\left( {NL} \right)^3})$, where $I$ represents the number of iterations required for convergence.


 It is worth noting that when $N = 1$ or $N$ is sufficiently large such that the  intermediate movement time slots can be ignored, UAV swarm trajectory optimization reduces to the placement optimization. In the following, the simple scenarios with $L = 1$ and $L =2$ are considered.

 \subsubsection{$L = 1$}
 The optimal single UAV placement position is ${{\bf{q}}^ {\star}} ={\left[ {0,0,H} \right]^T}$, i.e., placed directly above the UE to minimize the path loss.

 \subsubsection{$L = 2$}
 For the scenario with $L = 2$, when ${z_1} = {z_2} = H$, the placement optimization problem becomes
 \begin{equation}\label{optimizationProblemL2}
 \begin{aligned}
 \mathop {\max }\limits_{{{\bf{q}}_1},{{\bf{q}}_2}}&\  \bar P{\beta _0}\left( {\frac{1}{{{{\left\| {{{\bf{q}}_1} - {\bf{w}}} \right\|}^2}}} + \frac{1}{{{{\left\| {{{\bf{q}}_2} - {\bf{w}}} \right\|}^2}}}} \right)\\
 {\rm{s.t.}} &\ {\left\| {{{\bf{q}}_1} - {{\bf{q}}_{2}}} \right\|^2} \ge d_{\min }^2.
 \end{aligned}
 \end{equation}

 \begin{proposition} \label{UAVPlacementL2Proposition}
 An optimal solution to the UAV placement for problem \eqref{optimizationProblemL2} is ${\bf q}_1^{\star} = {\left[ {x^{\star},0,H } \right]^T}$ and ${\bf q}_2^{\star} = {\left[ { x^{\star} + d_{\min},0,H } \right]^T}$, where
 \begin{equation}\label{optimalPlacementL2}
 {x^ \star } = \left\{ \begin{split}
 &- \frac{{{d_{\min }}}}{2},\ {\rm if}\ \zeta  > \sqrt 3 /2,\\
 &- \frac{{{d_{\min }}}}{2} \pm {d_{\min }}\sqrt {\sqrt {{\zeta ^2} + \frac{1}{4}}  - \left( {{\zeta ^2} + \frac{1}{4}} \right)},\ {\rm otherwise},
 \end{split} \right.
 \end{equation}
 with $\zeta \triangleq H/{d_{\min }}$.
 \end{proposition}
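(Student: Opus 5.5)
We would prove Proposition~\ref{UAVPlacementL2Proposition} in three steps: reduce the 3D problem to a one-dimensional one, solve that in closed form with a symmetric substitution, then read off the threshold on $\zeta$. Since ${z_1}={z_2}=H$ and ${\bf w}={\bf 0}$, write $\rho_l$ for the horizontal distance of UAV $l$ from the UE. The objective becomes $\bar P\beta_0\big(1/(\rho_1^2+H^2)+1/(\rho_2^2+H^2)\big)$, and the constraint only involves the horizontal separation of the two UAVs.

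\textbf{Step 1: reduce to a line.} First, the constraint is active at any optimum. Otherwise we could shrink the horizontal position vector of one UAV slightly toward the UE, which strictly increases the objective; this fails only if both UAVs sit at the origin, which is infeasible. Second, by the triangle inequality the separation is at most $\rho_1+\rho_2$. So for given $(\rho_1,\rho_2)$ the constraint can be met iff $\rho_1+\rho_2\ge d_{\min}$, and that is achieved by placing the two UAVs collinearly with the UE. By rotational symmetry about the vertical axis through ${\bf w}$, we may take this line to be the $x$-axis. Hence ${\bf q}_1=[x,0,H]^T$, ${\bf q}_2=[x+d_{\min},0,H]^T$, and the problem reduces to maximizing
\begin{equation*}
g(x)=\frac{1}{x^2+H^2}+\frac{1}{(x+d_{\min})^2+H^2}.
\end{equation*}

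\textbf{Step 2: symmetrize and change variables.} Set $a=d_{\min}/2$ and $x=-a+t$, so that $g$ is even in $t$. Combining the two fractions gives
\begin{equation*}
g=\frac{2(u+c)}{(u+c)^2-4a^2u},\qquad u=t^2\ge 0,\quad c\triangleq a^2+H^2.
\end{equation*}
Put $s=u+c\ge c$. Then
\begin{equation*}
g=\frac{2}{\varphi(s)-4a^2},\qquad \varphi(s)=s+\frac{4a^2c}{s}.
\end{equation*}
So maximizing $g$ is equivalent to minimizing $\varphi$ over $s\ge c$. Note that the denominator $\varphi(s)-4a^2$ is positive, since it equals the product of two squared distances.

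\textbf{Step 3: solve the convex scalar problem.} The function $\varphi$ is convex on $s>0$ with unconstrained minimizer $s^\star=2a\sqrt{c}$. The main point to check is whether $s^\star$ lies in $[c,\infty)$. This holds iff $4a^2>c$, i.e., $3a^2>H^2$, i.e., $\zeta<\sqrt3/2$.
\begin{itemize}
\item If $\zeta<\sqrt3/2$, the optimum is $u=2a\sqrt c-c$. Dividing by $d_{\min}^2$ gives $u/d_{\min}^2=\sqrt{\zeta^2+1/4}-(\zeta^2+1/4)$, so $t=\pm d_{\min}\sqrt{\cdot}$, which yields the second branch of \eqref{optimalPlacementL2}.
\item Otherwise $\varphi$ is nondecreasing on $[c,\infty)$, so $u=0$ and $x^\star=-d_{\min}/2$. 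This is the first branch.
\end{itemize}
The boundary case $\zeta=\sqrt3/2$ gives $u=0$ in both formulas, so the branches agree there.

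Finally, we would sanity-check that the resulting $x^\star$ lies in $[-d_{\min},0]$, i.e., that the UE falls between the two UAVs, which is consistent with Step 1. This holds because $u\le a^2$ is equivalent to $(\sqrt c-a)^2\ge 0$.
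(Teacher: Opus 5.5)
Your proof is correct. It uses the same reduction to $h(x)$ and the same symmetric shift $x=-d_{\min}/2+t$ as the paper, but it finishes differently and more completely.

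\textbf{How the paper finishes.} The paper asserts without argument that one may take $y_1=y_2=0$ with separation exactly $d_{\min}$. It then differentiates and factors the stationarity condition as $y\,[\,y^4+2d_{\min}^2(\zeta^2+\frac{1}{4})y^2+d_{\min}^4(\zeta^2+\frac{1}{4})(\zeta^2-\frac{3}{4})\,]=0$, and splits cases on the sign of the constant term. For $\zeta\le\sqrt{3}/2$ it only states that ``it can be verified'' that the nonzero roots, rather than $y=0$, give the maximum.

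\textbf{How your route relates.} With your $a=d_{\min}/2$ and $c=a^2+H^2$, the paper's bracket is exactly $(y^2+c)^2-4a^2c=s^2-4a^2c=s^2\varphi'(s)$. So both approaches find the same stationary points. Your recasting as minimizing the convex function $\varphi(s)=s+4a^2c/s$ over $s\ge c$ gives global optimality directly. It also explains the threshold $\zeta=\sqrt{3}/2$ as the point where the unconstrained minimizer $2a\sqrt{c}$ reaches the boundary $s=c$.

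\textbf{What each approach buys.} Your Step 1 (active constraint, triangle inequality, rotational symmetry) justifies the reduction to a line with separation $d_{\min}$, which the paper takes for granted. The paper's derivative route is more mechanical but leaves both this reduction and the global-maximum check unproved; your route closes both.

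\textbf{Minor slips.} First, $\varphi(s)-4a^2$ is the product of the two squared distances divided by $s$, not the product itself; it is still positive. Second, the condition $s^\star\ge c$ is $4a^2\ge c$ rather than strict inequality; you cover the equality case in your boundary remark anyway.
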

 \begin{IEEEproof}
 Please refer to Appendix~\ref{proofofUAVPlacementL2Proposition}.
 \end{IEEEproof}

 Proposition~\ref{UAVPlacementL2Proposition} shows that when $\zeta  > \sqrt 3 /2$, the two UAVs should be placed symmetrically w.r.t. the UE's horizontal position. On the other hand, when $\zeta  \le \sqrt 3 /2$, the asymmetric placement is preferable, so as to improve the SNR.

 For arbitrary number of UAVs, the optimized UAV swarm placement positions can be obtained by solving the following problem:
 \begin{equation}\label{OptimizationProblemSingleUEJoint}
 \begin{aligned}
 \mathop {\max }\limits_{{\bf{Q}} } &\ {\log _2}\left( {1 + \sum\limits_{l = 1}^L {\frac{{\bar P{\beta _0}}}{{{{\left\| {{{\bf{q}}_l} - {\bf{w}}} \right\|}^2}}}} } \right)\\
 {\rm{s.t.}}&\ {\left\| {{{\bf{q}}_l} - {{\bf{q}}_{l'}}} \right\|^2} \ge d_{\min }^2,\ \forall l, l' \ne l,\\
 &\ {z_l} \ge H,\ \forall l,
 \end{aligned}
 \end{equation}
 which can be tackled by the SCA technique similar to \eqref{OptimizationProblemSingleUESCA}. The details are omitted for brevity.

 Furthermore, to further reduce the complexity, we propose a low-complexity successive placement algorithm. Specifically, the first UAV is directly placed above the UE, i.e., ${{\bf{q}}_1} = {\left[ {0,0,H} \right]^T}$. Then, one UAV is successively placed in each step to maximize the resulting SNR, while satisfying the collision avoidance constraint. For UAV $l$, $l \ge 2$, the placement optimization problem is
 \begin{equation}\label{OptimizationProblemSingleUESuccessive}
 \begin{aligned}
 \mathop {\max }\limits_{{{\bf{q}}_l}}&\ {\log _2}\left( {{C_l} + \frac{{{\bar P}{\beta _0}}}{{{{\left\| {{{\bf{q}}_l} - {\bf{w}}} \right\|}^2}}}} \right)\\
 {\rm{s.t.}}&\ {\left\| {{{\bf{q}}_l} - {{\bf{q}}_{l'}}} \right\|^2} \ge d_{\min }^2,\ 1 \le l' \le l - 1,\\
 &\ {z_l} \ge H,\ \forall l,
 \end{aligned}
 \end{equation}
 where ${C_l} \triangleq 1 + \sum\nolimits_{l' = 1}^{l - 1} {\frac{{{\bar P}{\beta _0}}}{{{{\left\| {{{\bf{q}}_{l'}} - {\bf{w}}} \right\|}^2}}}} $ is a constant term, and problem \eqref{OptimizationProblemSingleUESuccessive} can be solved with the SCA technique. Note that compared to joint  placement optimization in \eqref{OptimizationProblemSingleUEJoint} that has a complexity of ${\cal O}\left( {{I_1}{L^3}} \right)$, the complexity for successive placement algorithm is reduced to ${\cal O}\left( {{I_2}\left( {L - 1} \right)} \right)$, where $I_1$ and $I_2$ represent the number of iterations required for convergence, respectively.

\subsection{Two UEs}\label{subSectionTwoUEs}
 In this subsection, we study the placement optimization for the case of two UEs. Without loss of generality, the two UEs are located at ${{\bf{w}}_1} = {\left[ {X,0,0} \right]^T}$ and ${{\bf{w}}_2} = {\left[ { - X,0,0} \right]^T}$, respectively.

 It is worth noting that the SINR expression for UE $k$ in \eqref{SINRUEk} constitutes a generalized Rayleigh quotient in terms of the receive beamforming, and the optimal solution is given by the minimum mean-square error (MMSE) beamforming. This yields the following SINR expression \cite{lu2022near}
 \begin{equation}\label{SINRTwoUE}
 {\gamma _k} = {\bar P_k}{\left\| {{{\bf{h}}_k}} \right\|^2}\left( {1 - \frac{{{{\bar P}_{k'}}{{\left\| {{{\bf{h}}_{k'}}} \right\|}^2}{\rho _{k,k'}}}}{{{{\bar P}_{k'}}{{\left\| {{{\bf{h}}_{k'}}} \right\|}^2} + 1}}} \right),
 \end{equation}
 where $k,k' = 1,2$, $ k \ne k'$, and ${\rho _{k,k'}} \triangleq \frac{{{{\left| {{\bf{h}}_k^H{{\bf{h}}_{k'}}} \right|}^2}}}{{{{\left\| {{{\bf{h}}_k}} \right\|}^2}{{\left\| {{{\bf{h}}_{k'}}} \right\|}^2}}}$ represents the squared-correlation coefficient between the near-field channels of UE $k$ and $k'$. It is shown that the resulting SINR of UE $k$ at slot $n$ is impacted by ${\rho _{k,k'}}$, and thus we study this term in the following. Specifically, by substituting \eqref{nearFieldChannelUEk} into ${\rho _{k,k'}}$, we have
 \begin{equation}\label{correlationCoefficient}
 {\rho _{k,k'}} = \frac{{{{\left| {{\bf{a}}_k^H{{\bf{a}}_{k'}}} \right|}^2}}}{{{{\left\| {{{\bf{a}}_k}} \right\|}^2}{{\left\| {{{\bf{a}}_{k'}}} \right\|}^2}}} = \frac{{{{\left| {\sum\limits_{l = 1}^L {\frac{{{r_k}{r_{k'}}}}{{{r_{k,l}}{r_{k',l}}}}{e^{{\rm{j}}\frac{{2\pi }}{\lambda }\left( {{r_{k,l}} - {r_{k',l}}} \right)}}} } \right|}^2}}}{{{{\left\| {{{\bf{a}}_k}} \right\|}^2}{{\left\| {{{\bf{a}}_{k'}}} \right\|}^2}}}.
 \end{equation}

 \begin{proposition} \label{zerocorrelationCoefficientProposition}
 When $L$ is an even number, a UAV swarm trajectory achieving ${\rho _{k,k'}} = 0$ can be expressed in terms of ${z_l}$ as
 \begin{equation}\label{IUIFreeUAVTrajectoryTwo}
 {{\bf q}_l} = \left\{ \begin{array}{l}
 {\left[ {\frac{{\left( {2\nu  + 1} \right)\lambda }}{8}\sqrt {\frac{{z_l^2}}{{{X^2} - {{\left( {\frac{{\left( {2\nu  + 1} \right)\lambda }}{8}} \right)}^2}}} + 1} ,0,{z_l}} \right]^T},\\ \ \ \ \ \ \ \ \ \ \ \ \ \ \ \ \ \ \ \ \ \ \ \ \nu  \in {\mathbb Z}, 1 \le l \le \frac{L}{2},\\
 {\left[ { - \frac{{\left( {2\nu  + 1} \right)\lambda }}{8}\sqrt {\frac{{z_{l - \frac{L}{2}}^2}}{{{X^2} - {{\left( {\frac{{\left( {2\nu  + 1} \right)\lambda }}{8}} \right)}^2}}} + 1} ,0,{z_{l - \frac{L}{2}}}} \right]^T},\\ \ \ \ \ \ \ \ \ \ \ \ \ \ \ \ \ \ \ \ \ \ \ \ \nu  \in {\mathbb Z}, \frac{L}{2} + 1 \le l \le L.
 \end{array} \right.
 \end{equation}

 \end{proposition}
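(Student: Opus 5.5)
The plan is to pair the UAVs so that their contributions to the numerator of \eqref{correlationCoefficient} cancel exactly. It suffices to make the inner sum $\sum_{l=1}^{L}\frac{r_k r_{k'}}{r_{k,l}r_{k',l}}e^{{\rm j}\frac{2\pi}{\lambda}(r_{k,l}-r_{k',l})}$ vanish. The factor $r_k r_{k'}$ is common to all terms and can be dropped. I will match UAV $l$, $1\le l\le L/2$, with UAV $l+L/2$ placed at the mirror image of ${\bf q}_l$ with respect to the plane $x=0$, i.e., at $[-x_l,0,z_l]^T$. Since ${\bf w}_1=[X,0,0]^T$ and ${\bf w}_2=[-X,0,0]^T$ are mirror images under the same reflection, the distances swap:
\begin{equation*}
r_{1,l+L/2}=r_{2,l},\qquad r_{2,l+L/2}=r_{1,l}.
\end{equation*}

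The two members of each pair therefore have the same amplitude $A_l\triangleq 1/(r_{1,l}r_{2,l})$. Their phases are opposite, namely $\pm\phi_l$ with $\phi_l\triangleq\frac{2\pi}{\lambda}(r_{k,l}-r_{k',l})$. Hence the pair contributes the real number $2A_l\cos\phi_l$. This vanishes whenever $\phi_l=\frac{\pi}{2}+\nu\pi$ for some $\nu\in\mathbb Z$, that is, whenever
\begin{equation*}
|r_{1,l}-r_{2,l}|=\frac{(2\nu+1)\lambda}{4}.
\end{equation*}
Because cosine is even, the sign of $r_{1,l}-r_{2,l}$ (equivalently, the choice of $k$ versus $k'$) is irrelevant. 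The integer $\nu$ may even differ from pair to pair. If every pair satisfies this condition, the whole numerator is zero and $\rho_{k,k'}=0$.

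The remaining step is geometric: identify the locus of the condition. The set of points whose distances to the two foci ${\bf w}_1,{\bf w}_2$ differ by the constant $2a$, with $a=\frac{(2\nu+1)\lambda}{8}$, is a hyperboloid of two sheets. Restricting to the plane $y=0$, it is the hyperbola
\begin{equation*}
\frac{x^2}{a^2}-\frac{z^2}{X^2-a^2}=1,
\end{equation*}
which is nondegenerate provided $a<X$ (implicitly required for the square root in \eqref{IUIFreeUAVTrajectoryTwo} to be real). Solving for $x$ on the branch selected by the sign of $a$ gives $x_l=a\sqrt{z_l^2/(X^2-a^2)+1}$, which is exactly the first line of \eqref{IUIFreeUAVTrajectoryTwo}. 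The mirrored UAVs give the second line.

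The main care point is this conic computation. I need to verify that squaring twice introduces no spurious branch, i.e., that the chosen branch really has $|r_{1}-r_{2}|=2|a|$. I also need to note that negative $\nu$ simply flips the sign of $x$, which remains admissible by the evenness argument. Everything else is direct substitution. The construction imposes no constraint on $z_l$ beyond $z_l$ being shared within each mirrored pair, which is why the trajectory is expressed in terms of $z_l$.
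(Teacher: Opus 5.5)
Your proposal is correct and follows essentially the same route as the paper's proof. Like the paper, you mirror-pair UAV $l$ with UAV $l+L/2$ at $[-x_l,0,z_l]^T$ so the two UE distances swap, reduce each pair to $2A_l\cos\phi_l$, force $\phi_l=\frac{\pi}{2}+\nu\pi$, and identify the locus $r_{k,l}-r_{k',l}=\frac{(2\nu+1)\lambda}{4}$ as the hyperbola with foci at the two UEs. Your extra care points are refinements the paper leaves implicit: evenness of cosine makes the branch sign irrelevant, the real condition is $|a|<X$ rather than $a<X$, and the branch check closes immediately because $r_{1}+r_{2}\ge 2X>2|a|$, so squaring introduces no spurious points.
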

 \begin{IEEEproof}
 Please refer to Appendix~\ref{proofofzerocorrelationCoefficientProposition}.
 \end{IEEEproof}

 Proposition~\ref{zerocorrelationCoefficientProposition} shows that an IUI-free communication, i.e.,  ${\rho _{k,k'}} = 0$, can be achieved by following the hyperbolic placement given in \eqref{hyperbolicEquation} in Appendix~\ref{proofofzerocorrelationCoefficientProposition}, as illustrated in Fig.~\ref{fig:twoUEhyperbolic}. Since the $L$ UAVs are symmetrically arranged about the $z$-axis, the collision avoidance constraints among UAVs with $1 \le l \le L/2$ can be satisfied by adjusting the $z$-axis coordinate ${z_l}$, and the constraints for the remaining UAVs with $L/2 + 1 \le l \le L$ can be naturally satisfied. Moreover, UAVs on the left and right branches should satisfy the collision avoidance constraints, i.e., the distance between the first left and right UAVs with altitude $H$ should be no smaller than $d_{\min}$, i.e.,
 \begin{equation}\label{nuParameter}
 2\frac{{\left( {2\nu  + 1} \right)\lambda }}{8}\sqrt {\frac{{{H^2}}}{{{X^2} - {{\left( {\frac{{\left( {2\nu  + 1} \right)\lambda }}{8}} \right)}^2}}} + 1}  \ge {d_{\min }}.
 \end{equation}
 A closer look at the LHS of \eqref{nuParameter} reveals that it is a monotonic increasing function of $\nu$, and thus the minimum $\nu$ satisfying \eqref{nuParameter} can be efficiently found via the bisection method.


 \begin{figure}[!t]
 \centering
 \centerline{\includegraphics[width=3.5in,height=2.625in]{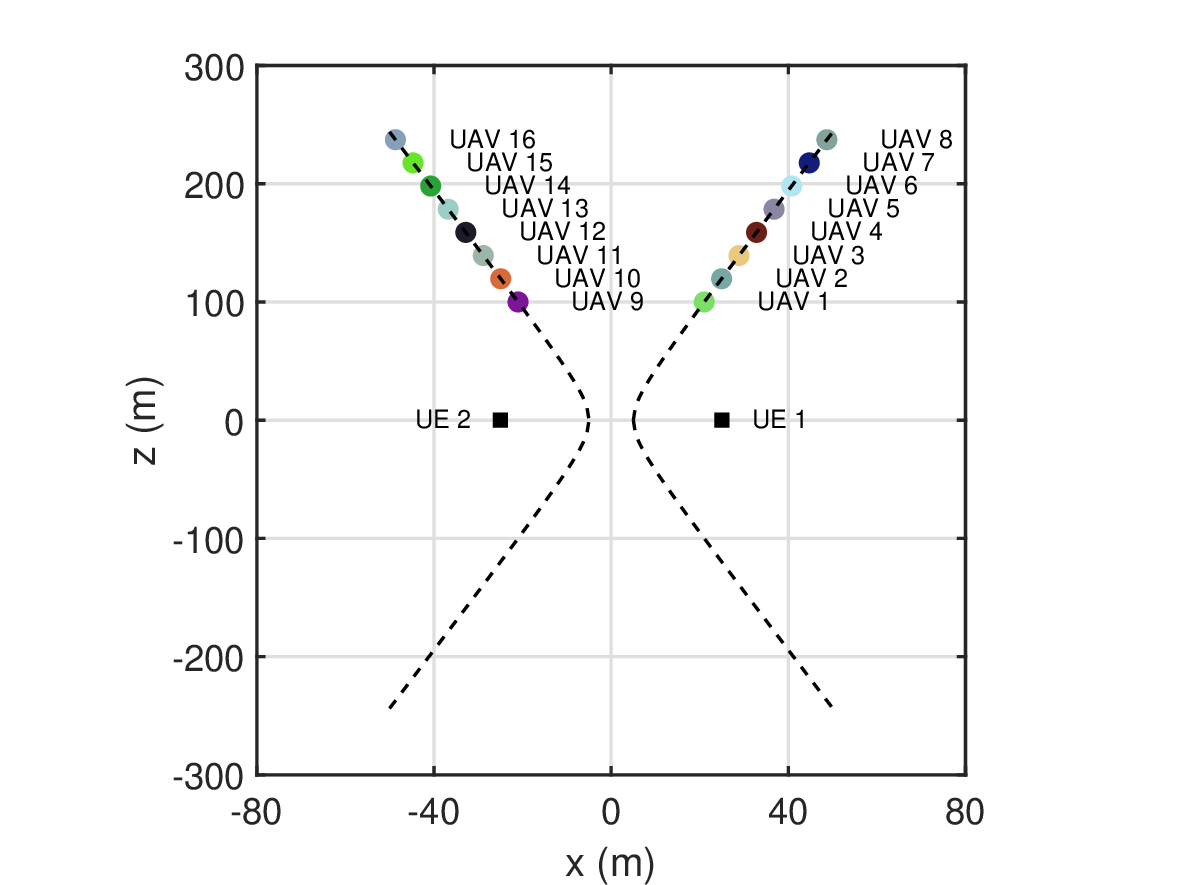}}
 \caption{An illustration of UAV swarm placement positions for IUI-free communication with two UEs.}
 \label{fig:twoUEhyperbolic}
 \end{figure}

 Moreover, with the UAV swarm placement position given in \eqref{IUIFreeUAVTrajectoryTwo}, the achievable rate of UE $k$, $k =1,2$, is
 \begin{equation}
 \begin{aligned}
 {R_k} = {\log _2}\left( {1 + \sum\limits_{l = 1}^{L/2} {\left( {\frac{{{{\bar P}_k}{\beta _0}}}{{{{\left( {{x_l} - X} \right)}^2} + z_l^2}} + \frac{{{{\bar P}_k}{\beta _0}}}{{{{\left( {{x_l} + X} \right)}^2} + z_l^2}}} \right)} } \right).
 \end{aligned}
 \end{equation}
 Then, problem (P1) is reduced to
 \begin{equation}
 \begin{aligned}
 \mathop {\max }\limits_{\left\{ {{z_l}} \right\}_{l = 1}^{L/2}} &\ \min \left( {{R_1},{R_2}} \right)\\
 {\rm{s.t.}}&\ {\left\| {{{\bf{q}}_l} - {{\bf{q}}_{l'}}} \right\|^2} \ge d_{\min }^2,\ \forall l, l' \ne l, \\
 &\ {z_l} \ge H,\ \forall l. \\
 \end{aligned}
 \end{equation}

 To tackle this problem, we propose a successive placement scheme. Specifically, let the $z$-coordinate value of the first UAV be $H$, i.e., ${z_1} = H$, and ${x_1}$ can be determined based on \eqref{IUIFreeUAVTrajectoryTwo}. Moreover, by leveraging the property that the distances between any position of the upper branch of the hyperbola and two foci increase with its $z$-coordinate value, the second UAV is placed on the right branch of the hyperbola such that $\left\| {{{\bf{q}}_2} - {{\bf{q}}_1}} \right\| = {d_{\min }}$, thus obtaining a larger SNR. The remaining UAVs' placement positions can be successively determined based on $\left\| {{{\bf{q}}_l} - {{\bf{q}}_{l - 1}}} \right\| = {d_{\min }}$, $\forall l \in \left[ {3,L/2} \right]$. 

%
%
%
%

\section{Arbitrary Number of UEs}\label{sectionArbitraryUE}
 In this section, we investigate the case with arbitrary number of UEs, and propose an efficient alternating optimization algorithm to solve (P1).

 It is worth noting that UAV swarm trajectory has an impact on both the channel amplitude and phase. By exploiting the fact that the channel phase is more sensitive to trajectory variation than the channel amplitude, we propose to separately consider the impact of trajectory variation on amplitude and phase. Specifically, we first obtain an initial UAV swarm trajectory by focusing on the channel amplitude to maximize the minimum average SNR over $N$ time slots across $K$ UEs (by temporarily neglecting the IUI). Then, by fixing the channel amplitude with the above obtained result, the receive beamforming and UAV swarm trajectory are optimized in an alternating manner.

 \subsection{UAV Swarm Trajectory Initialization}\label{subSectionUAVTrajectoryInitialization}
 In this subsection, we focus on the impact of UAV swarm trajectory on the channel amplitude. By applying the MRC beamforming ${{\bf{v}}_k}\left[ n \right] = {{\bf{h}}_k}\left[ n \right]/\left\| {{{\bf{h}}_k}\left[ n \right]} \right\|$ to \eqref{SINRUEk}, the average SNR without considering the IUI for UE $k$ over $N$ time slots is
 \begin{equation}\label{averageSNRWithoutIUI}
 {{\bar \gamma }_k} = \frac{1}{N}\sum\limits_{n = 1}^N {{{\bar P}_k}{{\left\| {{{\bf{h}}_k}\left[ n \right]} \right\|}^2}}  = \frac{1}{N}\sum\limits_{n = 1}^N {\sum\limits_{l = 1}^L {\frac{{{{\bar P}_k}{\beta _0}}}{{{{\left\| {{{\bf{q}}_l}\left[ n \right] - {{\bf{w}}_k}} \right\|}^2}}}} }, \ \forall k.
 \end{equation}

 We aim to maximize the minimum average SNR in \eqref{averageSNRWithoutIUI} over $K$ UEs, by optimizing the UAV swarm trajectory. The optimization problem is formulated as (by dropping the constant term)
 \begin{equation}\label{trajectoryInitializationProblem}
 \begin{aligned}
 \mathop {\max }\limits_{{\bf{Q}}}&\ \mathop {\min }\limits_{k \in {\cal K}} \ \sum\limits_{n = 1}^N {\sum\limits_{l = 1}^L {\frac{{{{\bar P}_k}{\beta _0}}}{{{{\left\| {{{\bf{q}}_l}\left[ n \right] - {{\bf{w}}_k}} \right\|}^2}}}} } \\
 {\rm{s.t.}}&\ \left\| {{{\bf{q}}_l}\left[ {n + 1} \right] - {{\bf{q}}_l}\left[ n \right]} \right\| \le {{\tilde V}_{\max }},\ \forall l\ {\rm and}\  n = 1, \cdots, N-1,\\
 &\ {\left\| {{{\bf{q}}_l}\left[ n \right] - {{\bf{q}}_{l'}}\left[ n \right]} \right\|^2} \ge d_{\min }^2,\ \forall l, l' \ne l, n,\\
 &\ {{\bf{q}}_l}\left[ 1 \right] = {{\bf{q}}_{I,l}},\ {{\bf{q}}_l}\left[ N \right] = {{\bf{q}}_{F,l}},\ \forall l,\\
 &\ {z_l}\left[ n \right] \ge H,\ \forall l,n.
 \end{aligned}
 \end{equation}

 In the following, SCA technique is applied to tackle the non-convex optimization problem \eqref{trajectoryInitializationProblem}. We first introduce the slack variable $\Psi$ to transform problem \eqref{trajectoryInitializationProblem} into the following problem:
 \begin{align}\label{equiTrajectoryInitializationProblem}
 \mathop {\max }\limits_{{\bf{Q}},\Psi}&\ \Psi  \\
 {\rm{s.t.}}&\ \sum\limits_{n = 1}^N {\sum\limits_{l = 1}^L {\frac{{{{\bar P}_k}{\beta _0}}}{{{{\left\| {{{\bf{q}}_l}\left[ n \right] - {{\bf{w}}_k}} \right\|}^2}}}} }  \ge \Psi,\ \forall k, \tag{\ref{equiTrajectoryInitializationProblem}a}\\
 &\ \left\| {{{\bf{q}}_l}\left[ {n + 1} \right] - {{\bf{q}}_l}\left[ n \right]} \right\| \le {{\tilde V}_{\max }},\ \forall l\ {\rm and}\  n = 1, \cdots, N-1,\tag{\ref{equiTrajectoryInitializationProblem}b}\\
 &\ {\left\| {{{\bf{q}}_l}\left[ n \right] - {{\bf{q}}_{l'}}\left[ n \right]} \right\|^2} \ge d_{\min }^2,\ \forall l, l' \ne l, n,\tag{\ref{equiTrajectoryInitializationProblem}c}\\
 &\ {{\bf{q}}_l}\left[ 1 \right] = {{\bf{q}}_{I,l}},\ {{\bf{q}}_l}\left[ N \right] = {{\bf{q}}_{F,l}},\ \forall l,\tag{\ref{equiTrajectoryInitializationProblem}d}\\
  &\ {z_l}\left[ n \right] \ge H,\ \forall l,n.\tag{\ref{equiTrajectoryInitializationProblem}e}
 \end{align}

 A closer look at the LHS of (\ref{equiTrajectoryInitializationProblem}a) shows that $1/{\left\| {{{\bf{q}}_l}\left[ n \right] - {{\bf{w}}_k}} \right\|^2}$ is a convex function w.r.t. $\left\| {{{\bf{q}}_l}\left[ n \right] - {{\bf{w}}_k}} \right\|$. For given ${{\bf{\tilde q}}_l}\left[ n \right]$, it is lower-bounded by
 \begin{equation}
 \begin{aligned}
 &\frac{1}{{{{\left\| {{{\bf{q}}_l}\left[ n \right] - {{\bf{w}}_k}} \right\|}^2}}} \ge {\left[ {\frac{1}{{{{\left\| {{{\bf{q}}_l}\left[ n \right] - {{\bf{w}}_k}} \right\|}^2}}}} \right]_{{\rm{lb}}}} \triangleq\frac{1}{{{{\left\| {{{{\bf{\tilde q}}}_l}\left[ n \right] - {{\bf{w}}_k}} \right\|}^2}}} - \\
 &\ \ \ \ \ \frac{2}{{{{\left\| {{{{\bf{\tilde q}}}_l}\left[ n \right] - {{\bf{w}}_k}} \right\|}^3}}}\left( {\left\| {{{\bf{q}}_l}\left[ n \right] - {{\bf{w}}_k}} \right\| - \left\| {{{{\bf{\tilde q}}}_l}\left[ n \right] - {{\bf{w}}_k}} \right\|} \right).
 \end{aligned}
 \end{equation}

 As a result, given ${{\bf{\tilde q}}_l}\left[ n \right]$, we can solve the following optimization problem:
 \begin{equation}\label{equiTrajectoryInitializationProblem2}
 \begin{aligned}
 \mathop {\max }\limits_{{\bf{Q}},\Psi}&\ \Psi  \\
 {\rm{s.t.}}&\ \sum\limits_{n = 1}^N {\sum\limits_{l = 1}^L {{{\bar P}_k}{\beta _0}{{\left[ {\frac{1}{{{{\left\| {{{\bf{q}}_l}\left[ n \right] - {{\bf{w}}_k}} \right\|}^2}}}} \right]}_{{\rm{lb}}}}} }  \ge \Psi ,\ \forall k,\\
 &\ \left\| {{{\bf{q}}_l}\left[ {n + 1} \right] - {{\bf{q}}_l}\left[ n \right]} \right\| \le {{\tilde V}_{\max }},\ \forall l\ {\rm and}\  n = 1, \cdots, N-1,\\
 &\ {\left[ {{{\left\| {{{\bf{q}}_l}\left[ n \right] - {{\bf{q}}_{l'}}\left[ n \right]} \right\|}^2}} \right]_{{\rm{lb}}}} \ge d_{\min }^2,\ \forall l, l' \ne l, n,\\
 &\ {{\bf{q}}_l}\left[ 1 \right] = {{\bf{q}}_{I,l}},\ {{\bf{q}}_l}\left[ N \right] = {{\bf{q}}_{F,l}},\ \forall l,\\
 &\ {z_l}\left[ n \right] \ge H,\ \forall l,n,
 \end{aligned}
 \end{equation}
 which is a convex optimization problem, and thus can be efficiently solved via CVX.

 The obtained solution to \eqref{trajectoryInitializationProblem}, denoted as $\left\{ {{{{\bf{\hat q}}}_l}\left[ n \right]} \right\}$, serves as the initial UAV swarm trajectory for the subsequent alternating optimization. In particular, the channel amplitude is fixed with $\left\{ {{{{\bf{\hat q}}}_l}\left[ n \right]} \right\}$, and the impact of UAV swarm trajectory on the channel phase is considered in the alternating optimization. Such an approximation is valid since the wavelength-scale adjustment yields a significant variation in terms of the channel phase, while having a negligible impact on the channel amplitude. Then, for alternating optimization, the channel from UE $k$ to the UAV swarm enabled AMA system at slot $n$ is approximated as
 \begin{equation}\label{approxNearFieldChannelUEk}
 \begin{aligned}
 {{\bf{h}}_k}\left[ n \right] &= {\alpha _k}\left[ {\frac{{{r_k}}}{{{{\hat r}_{k,1}}\left[ n \right]}}{e^{ - {\rm{j}}\frac{{2\pi }}{\lambda }\left( {{r_{k,1}}\left[ n \right] - {r_k}} \right)}}, \cdots ,} \right.\\
 &\ \ \ \ \ \ \ \ \ \ \ \ \ \ \ \ {\left. {\frac{{{r_k}}}{{{{\hat r}_{k,L}}\left[ n \right]}}{e^{ - {\rm{j}}\frac{{2\pi }}{\lambda }\left( {{r_{k,L}}\left[ n \right] - {r_k}} \right)}}} \right]^T},
 \end{aligned}
 \end{equation}
 where ${{\hat r}_{k,l}}\left[ n \right] \triangleq \left\| {{{{\bf{\hat q}}}_l}\left[ n \right] - {{\bf{w}}_k}} \right\|$, $\forall l,k,n$.

 \subsection{Receive Beamforming and UAV Swarm Trajectory Optimization}
 For given UAV swarm trajectory ${\bf Q}$, the optimal receive beamforming is given by the MMSE solution \cite{lu2025UAVMA}
 \begin{equation}\label{optimalReceiveBeamforming}
 {{\bf{v}}_k}\left[ n \right] = \frac{{{\bf{C}}_k^{ - 1}\left[ n \right]{{\bf{h}}_k}\left[ n \right]}}{{\left\| {{\bf{C}}_k^{ - 1}\left[ n \right]{{\bf{h}}_k}\left[ n \right]} \right\|}},\ \forall k,n,
 \end{equation}
 where ${{\bf{C}}_k}\left[ n \right] \triangleq {{\bf{I}}_L} + \sum\nolimits_{i = 1,i \ne k}^K {{{\bar P}_i}{{\bf{h}}_i}\left[ n \right]{\bf{h}}_i^H\left[ n \right]} $ represents the interference-plus-noise covariance matrix of UE $k$ at slot $n$.

 On the other hand, for given receive beamforming ${\bf V}$, the sub-problem of (P1) for optimizing UAV swarm trajectory is
 \begin{equation}\label{originalTrajectoryOptimizationProblem}
 \begin{aligned}
 \mathop {\max }\limits_{{\bf{Q}}}&\ \mathop {\min }\limits_{k \in {\cal K}} \ \sum\limits_{n = 1}^N {{R_k}\left[ n \right]}\\
 {\rm{s.t.}}&\ \left\| {{{\bf{q}}_l}\left[ {n + 1} \right] - {{\bf{q}}_l}\left[ n \right]} \right\| \le {{\tilde V}_{\max }},\ \forall l\ {\rm and}\  n = 1, \cdots, N-1,\\
 &\ {\left\| {{{\bf{q}}_l}\left[ n \right] - {{\bf{q}}_{l'}}\left[ n \right]} \right\|^2} \ge d_{\min }^2,\ \forall l, l' \ne l, n,\\
 &\ {{\bf{q}}_l}\left[ 1 \right] = {{\bf{q}}_{I,l}},\ {{\bf{q}}_l}\left[ N \right] = {{\bf{q}}_{F,l}},\ \forall l,\\
  &\ {z_l}\left[ n \right] \ge H,\ \forall l,n.
 \end{aligned}
 \end{equation}
 In the following, each UAV trajectory is optimized alternatively in an iterative manner.



 By introducing the slack variables $\left\{ {{\eta _k}\left[ n \right],{\mu _k}\left[ n \right],\forall k,n} \right\}$ and $\Gamma $, the sub-problem of \eqref{originalTrajectoryOptimizationProblem} for optimizing UAV $l$'s trajectory with given $\left\{ {{{\bf{q}}_{l'}}\left[ n \right],\forall l' \ne l} \right\}_{n = 1}^N$ can be transformed into
 \begin{align}\label{trajectoryOptimizationProblem}
 &\ \mathop {\max }\limits_{\left\{ {{{\bf{q}}_l}\left[ n \right]} \right\}_{n = 1}^N,\left\{ {{\eta _k}\left[ n \right],{\mu _k}\left[ n \right]} \right\},\Gamma }\ \Gamma \\
 {\rm{s.t.}}&\ \sum\limits_{n = 1}^N {\ln \left( {{e^{\left( {{\eta _k}\left[ n \right] - {\mu _k}\left[ n \right]} \right)}}} \right)}  \ge \Gamma \ln 2,\ \forall k,\tag{\ref{trajectoryOptimizationProblem}a}\\
 & 1 + \sum\limits_{i = 1}^K {{{\bar P}_i}{{\left| {{\bf{v}}_k^H\left[ n \right]{{\bf{h}}_i}\left[ n \right]} \right|}^2}}  \ge {e^{{\eta _k}\left[ n \right]}},\ \forall k,n,\tag{\ref{trajectoryOptimizationProblem}b}\\
 & 1 + \sum\limits_{i = 1,i \ne k}^K {{{\bar P}_i}{{\left| {{\bf{v}}_k^H\left[ n \right]{{\bf{h}}_i}\left[ n \right]} \right|}^2}}  \le {e^{{\mu _k}\left[ n \right]}},\ \forall k,n,\tag{\ref{trajectoryOptimizationProblem}c}\\
 &\ \left\| {{{\bf{q}}_l}\left[ {n + 1} \right] - {{\bf{q}}_l}\left[ n \right]} \right\| \le {{\tilde V}_{\max }},\ \forall n = 1, \cdots ,N - 1,\tag{\ref{trajectoryOptimizationProblem}d}\\
 &\ {\left\| {{{\bf{q}}_l}\left[ n \right] - {{\bf{q}}_{l'}}\left[ n \right]} \right\|^2} \ge {d_{\min}^2},\ \forall l' \ne l,n,\tag{\ref{trajectoryOptimizationProblem}e}\\
 &\ {{\bf{q}}_l}\left[ 1 \right] = {{\bf{q}}_{I,l}},\ {{\bf{q}}_l}\left[ N \right] = {{\bf{q}}_{F,l}},\tag{\ref{trajectoryOptimizationProblem}f}\\
 &\ {z_l}\left[ n \right] \ge H,\ \forall n,\tag{\ref{trajectoryOptimizationProblem}g}
 \end{align}
 which is a non-convex optimization problem due to the constraints (\ref{trajectoryOptimizationProblem}b), (\ref{trajectoryOptimizationProblem}c), and (\ref{trajectoryOptimizationProblem}e).


 Let ${{h_{i,l}}\left[ n \right]}$ and ${v_{k,l}}\left[ n \right]$ represent the $l$-th element of ${{{\bf{h}}_i}\left[ n \right]}$ and ${{\bf{v}}_k}\left[ n \right]$, respectively, ${{{\bf{\bar h}}_{i,l}}\left[ n \right]} \in {\mathbb C}^{\left(L -1 \right) \times 1}$ and ${{{\bf{\bar v}}}_{k,l}}\left[ n \right] \in {\mathbb C}^{\left(L -1 \right) \times 1}$ represent the resulting vectors after removing ${{h_{i,l}}\left[ n \right]}$ and ${v_{k,l}}\left[ n \right]$ from ${{{\bf{h}}_i}\left[ n \right]}$ and ${{\bf{v}}_k}\left[ n \right]$, respectively. After some manipulations, ${g_{k,i}}\left[ n \right] \triangleq {\left| {{\bf{v}}_k^H\left[ n \right]{{\bf{h}}_i}\left[ n \right]} \right|^2}$ can be expressed as
 \begin{equation}\label{gkil}
 \begin{aligned}
 &{g_{k,i}}\left[ n \right] = \sum\limits_{l' = 1,l' \ne l}^L {2{\rm{Re}}\left\{ {h_{i,l}^*\left[ n \right]{V_{k,l,l'}}\left[ n \right]{h_{i,l'}}\left[ n \right]} \right\}}\\
 &+ h_{i,l}^*\left[ n \right]{V_{k,l,l}}\left[ n \right]{h_{i,l}}\left[ n \right] + {\bf{\bar h}}_{i,l}^H\left[ n \right]{{{\bf{\bar V}}}_{k,l,l}}\left[ n \right]{{{\bf{\bar h}}}_{i,l}}\left[ n \right]\\
 &= \underbrace {\sum\limits_{l' = 1,l' \ne l}^L {2{\mathop{\rm Re}\nolimits} \left\{ {h_{i,l}^ * \left[ n \right]{V_{k,l,l'}}\left[ n \right]{h_{i,l'}}\left[ n \right]} \right\}} }_{{f_{k,i,l}}\left[ n \right]}\\
 & + \underbrace {\frac{{{\beta _0}}}{{\hat r_{i,l}^2\left[ n \right]}}{V_{k,l,l}}\left[ n \right] + {\bf{\bar h}}_{i,l}^H\left[ n \right]{{{\bf{\bar V}}}_{k,l,l}}\left[ n \right]{{{\bf{\bar h}}}_{i,l}}\left[ n \right]}_{{{\bar f}_{k,i,l}}\left[ n \right]},\\
 \end{aligned}
 \end{equation}
 where ${V_{k,l,l'}}\left[ n \right] \triangleq {v_{k,l}}\left[ n \right]v_{k,l'}^*\left[ n \right]$ and ${{{\bf{\bar V}}}_{k,l,l}}\left[ n \right]  \triangleq {{{\bf{\bar v}}}_{k,l}}\left[ n \right]{\bf{\bar v}}_{k,l}^H\left[ n \right] \in {{\mathbb C}^{\left( {L - 1} \right) \times \left( {L - 1} \right)}}$.

 Furthermore, ${f_{k,i,l}}\left[ n \right]$ can be explicitly expressed in terms of ${{\bf{q}}_l}\left[ n \right]$ as
 \begin{equation}\label{fkil}
 \begin{aligned}
 &{f_{k,i,l}}\left[ n \right] = \sum\limits_{l' = 1,l' \ne l}^L {\frac{{2{{\left| {{\alpha _i}} \right|}^2}\left| {{V_{k,l,l'}}\left[ n \right]} \right|r_i^2}}{{{{\hat r}_{i,l}}\left[ n \right]{{\hat r}_{i,l'}}\left[ n \right]}} \times } \\
  &\ \ {\mathop{\rm Re}\nolimits} \left\{ {{e^{{\rm{j}}\left( {\frac{{2\pi }}{\lambda }\left( {{r_{i,l}}\left[ n \right] - {r_{i,l'}}\left[ n \right]} \right) + \angle {V_{k,l,l'}}\left[ n \right]} \right)}}} \right\}\\
 &= \sum\limits_{l' = 1,l' \ne l}^L {\frac{{2{{\left| {{\alpha _i}} \right|}^2}\left| {{V_{k,l,l'}}\left[ n \right]} \right|r_i^2}}{{{{\hat r}_{i,l}}\left[ n \right]{{\hat r}_{i,l'}}\left[ n \right]}}}  \times \\
 &\ \ \cos \left( {\frac{{2\pi }}{\lambda }\left( {\left\| {{{\bf{q}}_l}\left[ n \right] - {{\bf{w}}_i}} \right\| - \left\| {{{\bf{q}}_{l'}}\left[ n \right] - {{\bf{w}}_i}} \right\|} \right) + \angle {V_{k,l,l'}}\left[ n \right]} \right).
 \end{aligned}
 \end{equation}
 It is observed from \eqref{gkil} that the first term ${f_{k,i,l}}\left[ n \right]$ depends on ${{\bf{q}}_l}\left[ n \right]$, while the second term ${{\bar f}_{k,i,l}}\left[ n \right]$ is independent of it. Moreover, as can be seen in \eqref{fkil}, ${f_{k,i,l}}\left[ n \right]$ is neither convex nor concave w.r.t. ${{\bf{q}}_l}\left[ n \right]$. To tackle this issue, we construct two surrogate functions serving as the lower and upper bounds of ${f_{k,i,l}}\left[ n \right]$, respectively.

 \begin{lemma}\label{quadraticSurrogatelemma}
 For given ${{{\bf{\tilde q}}}_l}\left[ n \right]$, the lower and upper bounds of ${f_{k,i,l}}\left[ n \right]$ are given by
 \begin{equation}\label{lowerBoundfkim}
 \begin{aligned}
 &{f_{k,i,l}}\left[ n \right] \ge {\left[ {{f_{k,i,l}}\left[ n \right]} \right]_{{\rm{lb}}}} = {f_{k,i,l}}\left( {{{{\bf{\tilde q}}}_l}\left[ n \right]} \right) +  \\
 &\ \ \ \ \ \ \ \nabla {f_{k,i,l}}{\left( {{{{\bf{\tilde q}}}_l}\left[ n \right]} \right)^T}\left( {{{\bf{q}}_l}\left[ n \right] - {{{\bf{\tilde q}}}_l}\left[ n \right]} \right)\\
 &- \frac{{{\omega _{k,i,l}}\left[ n \right]}}{2}{\left( {{{\bf{q}}_l}\left[ n \right] - {{{\bf{\tilde q}}}_l}\left[ n \right]} \right)^T}\left( {{{\bf{q}}_l}\left[ n \right] - {{{\bf{\tilde q}}}_l}\left[ n \right]} \right),
 \end{aligned}
 \end{equation}
 \begin{equation}\label{upperBoundfkim}
 \begin{aligned}
 &{f_{k,i,l}}\left[ n \right] \le {\left[ {{f_{k,i,l}}\left[ n \right]} \right]_{{\rm{ub}}}} = {f_{k,i,l}}\left( {{{{\bf{\tilde q}}}_l}\left[ n \right]} \right) + \\
 &\ \ \ \ \ \ \ \nabla {f_{k,i,l}}{\left( {{{{\bf{\tilde q}}}_l}\left[ n \right]} \right)^T}\left( {{{\bf{q}}_l}\left[ n \right] - {{{\bf{\tilde q}}}_l}\left[ n \right]} \right)\\
 &+\frac{{{\omega_{k,i,l}}\left[ n \right]}}{2}{\left( {{{\bf{q}}_l}\left[ n \right] - {{{\bf{\tilde q}}}_l}\left[ n \right]} \right)^T}\left( {{{\bf{q}}_l}\left[ n \right] - {{{\bf{\tilde q}}}_l}\left[ n \right]} \right),
 \end{aligned}
 \end{equation}
 where ${\omega _{k,i,l}}\left[ n \right] \triangleq \frac{{4{\pi ^2}}}{{{\lambda ^2}}}\sum\nolimits_{l' = 1,l' \ne l}^L {{\zeta _{k,i,l,l'}}\left[ n \right]} $, with ${\zeta _{k,i,l,l'}}\left[ n \right] \triangleq \frac{{2{{\left| {{\alpha _i}} \right|}^2}\left| {{V_{k,l,l'}}\left[ n \right]} \right|r_i^2}}{{{{\hat r}_{i,l}}\left[ n \right]{{\hat r}_{i,l'}}\left[ n \right]}}$, and $\nabla {f_{k,i,l}}({{\bf{\tilde q}}_l}\left[ n \right])$ denotes the gradient of ${{f_{k,i,l}}\left[ n \right]}$ over ${{\bf{\tilde q}}_l}\left[ n \right]$, which can be obtained based on  \eqref{firstOrderDerivative} in Appendix \ref{proofOfquadraticSurrogatelemma}.
 \end{lemma}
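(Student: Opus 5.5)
The plan is to use the standard curvature-bounded (Lipschitz-gradient) quadratic surrogate argument. If a twice-differentiable function $f:\mathbb{R}^3\to\mathbb{R}$ has Hessian satisfying $-\omega{\bf I}_3 \preceq \nabla^2 f({\bf q}) \preceq \omega {\bf I}_3$ for all ${\bf q}$, then Taylor's theorem with integral remainder gives
\begin{equation*}
\left| f({\bf q}) - f({\bf \tilde q}) - \nabla f({\bf \tilde q})^T({\bf q}-{\bf \tilde q})\right| \le \frac{\omega}{2}\left\|{\bf q}-{\bf \tilde q}\right\|^2 .
\end{equation*}
This is exactly the pair (\ref{lowerBoundfkim})--(\ref{upperBoundfkim}). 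So the whole task reduces to showing that $\omega_{k,i,l}[n]$ bounds the spectral norm of the Hessian of $f_{k,i,l}[n]$ with respect to ${\bf q}_l[n]$, uniformly in ${\bf q}_l[n]$.

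\textbf{Structure of $f_{k,i,l}[n]$.} By (\ref{fkil}), and since the amplitudes are frozen at $\hat r_{i,l}[n]$ as in (\ref{approxNearFieldChannelUEk}), $f_{k,i,l}[n]$ is a sum over $l'\neq l$ of terms
\begin{equation*}
\zeta_{k,i,l,l'}[n]\cos\bigl(\kappa\, d({\bf q}_l) + c_{l'}\bigr),
\end{equation*}
where $\kappa = 2\pi/\lambda$, $d({\bf q}) = \|{\bf q}-{\bf w}_i\|$, and $c_{l'}$ collects $-\kappa\|{\bf q}_{l'}[n]-{\bf w}_i\|$ and $\angle V_{k,l,l'}[n]$, both constant in ${\bf q}_l[n]$. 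By the triangle inequality for spectral norms, it suffices to bound the Hessian of each term by $\kappa^2\zeta_{k,i,l,l'}[n]$ plus whatever curvature-of-distance contribution appears.

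\textbf{Hessian of one term.} Write ${\bf u} = ({\bf q}-{\bf w}_i)/d$. Then
\begin{equation*}
\nabla d = {\bf u}, \qquad \nabla^2 d = \frac{{\bf I}-{\bf u}{\bf u}^T}{d}.
\end{equation*}
This first-order formula also gives the gradient referred to in the lemma (\ref{firstOrderDerivative}):
\begin{equation*}
\nabla f = -\sum_{l'}\zeta_{k,i,l,l'}[n]\,\kappa\sin(\cdot)\,{\bf u}.
\end{equation*}
For the Hessian of $g = \cos(\kappa d + c)$,
\begin{equation*}
\nabla^2 g = -\kappa^2\cos(\cdot)\,{\bf u}{\bf u}^T - \kappa\sin(\cdot)\,\frac{{\bf I}-{\bf u}{\bf u}^T}{d}.
\end{equation*}
The two matrices act on orthogonal subspaces, so the spectral norm is at most $\max\{\kappa^2, \kappa/d\}$.

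\textbf{Main obstacle: the $\kappa/d$ term.} The claimed constant is $\kappa^2\sum_{l'}\zeta_{k,i,l,l'}[n]$, so I must show $\kappa/d \le \kappa^2$, i.e. $d\ge \lambda/(2\pi)$. This holds because the altitude constraint $z_l[n]\ge H$ gives $d\ge H$, and in any practical setting $H\gg\lambda/(2\pi)$ (for example, $H$ on the order of meters versus centimeter wavelengths). I would state this explicitly as the justification, since it is the only place the bound is not purely algebraic.

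Summing the per-term bounds with the nonnegative weights $\zeta_{k,i,l,l'}[n]$ then gives $\|\nabla^2 f_{k,i,l}[n]\|_2 \le \omega_{k,i,l}[n]$, and the quadratic-surrogate inequality above yields both (\ref{lowerBoundfkim}) and (\ref{upperBoundfkim}), with equality at ${\bf q}_l[n] = {\bf \tilde q}_l[n]$.
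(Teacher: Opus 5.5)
Your proof is correct. It shares the paper's outer framework: bound the Hessian of $f_{k,i,l}[n]$ by $\omega_{k,i,l}[n]{\bf I}$, then apply Taylor's theorem. The key Hessian bound, however, is obtained differently and more sharply.

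\textbf{The paper's route.} The paper writes out the second partial derivatives and bounds the squared entries term by term. With your $\kappa = 2\pi/\lambda$, this gives $\|\nabla^2 f_{k,i,l}[n]\|_F^2 \le (\kappa\sum_{l'\neq l}\zeta_{k,i,l,l'}[n])^2(\kappa^2 + 2/\|{\bf q}_l[n]-{\bf w}_i\|^2)$. It then discards the $2/H^2$ term because $4\pi^2/\lambda^2 \gg 2/H^2$, and finally uses $\|\cdot\|_2 \le \|\cdot\|_F$. That discarding step is an approximation, not an inequality. Taken literally, the Frobenius route only certifies the slightly larger constant $\omega_{k,i,l}[n]\sqrt{1+2/(\kappa^2H^2)}$.

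\textbf{Your route.} You use the structure $\nabla^2\cos(\kappa d + c) = -\kappa^2\cos(\cdot)\,{\bf u}{\bf u}^T - (\kappa/d)\sin(\cdot)({\bf I}-{\bf u}{\bf u}^T)$. Its radial and tangential parts have orthogonal ranges, so the spectral norm is bounded directly by $\max\{\kappa^2,\kappa/d\}$. The stated $\omega_{k,i,l}[n]$ then follows exactly under the explicit condition $H \ge \lambda/(2\pi)$. Since ${\bf u}$ and $d$ do not depend on $l'$, the full Hessian has the same two-eigenspace structure, so nothing is lost in your triangle-inequality step. The paper's route is a mechanical entrywise computation. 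Yours is shorter, exact, and makes the validity condition transparent.

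\textbf{One point to state explicitly.} Your Hessian bound is not uniform on all of $\mathbb{R}^3$; it degrades near ${\bf w}_i$. The Taylor remainder argument therefore needs the bound along the segment joining $\tilde{\bf q}_l[n]$ and ${\bf q}_l[n]$. This holds because both endpoints satisfy $z \ge H$ and that half-space is convex, so $d \ge H$ on the whole segment.
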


 \begin{IEEEproof}
 Please refer to Appendix \ref{proofOfquadraticSurrogatelemma}.
 \end{IEEEproof}

 With Lemma~\ref{quadraticSurrogatelemma}, the lower and upper bounds of ${g_{k,i}}\left[ n \right]$ are given by ${\left[ {{g_{k,i}}\left[ n \right]} \right]_{{\rm{lb}}}} = {\left[ {{f_{k,i,l}}\left[ n \right]} \right]_{{\rm{lb}}}} + {{\bar f}_{k,i,l}}\left[ n \right]$ and ${\left[ {{g_{k,i}}\left[ n \right]} \right]_{{\rm{ub}}}} = {\left[ {{f_{k,i,l}}\left[ n \right]} \right]_{{\rm{ub}}}} + {{\bar f}_{k,i,l}}\left[ n \right]$.

 Furthermore, for given ${{\tilde \mu }_k}\left[ n \right]$, the right-hand-side (RHS) of (\ref{trajectoryOptimizationProblem}c) is lower-bounded by
 \begin{equation}
 {e^{{\mu _k}\left[ n \right]}} \ge {\left[ {{e^{{\mu _k}\left[ n \right]}}} \right]_{\rm lb}} \triangleq \left( {1 + {\mu _k}\left[ n \right] - {{\tilde \mu }_k}\left[ n \right]} \right){e^{{{\tilde \mu }_k}\left[ n \right]}}.
 \end{equation}
 The LHS of (\ref{trajectoryOptimizationProblem}e) is a convex function w.r.t. ${{{\bf{q}}_l}\left[ n \right]}$. For given ${{{\bf{\tilde q}}}_l}\left[ n \right]$, it is lower-bounded by
 \begin{equation}
 \begin{aligned}
 &{\left\| {{{\bf{q}}_l}\left[ n \right] - {{\bf{q}}_{l'}}\left[ n \right]} \right\|^2} \ge {\left[ {{{\left\| {{{\bf{q}}_l}\left[ n \right] - {{\bf{q}}_{l'}}\left[ n \right]} \right\|}^2}} \right]_{{\rm{lb}},2}} \triangleq \\
 &{\left\| {{{{\bf{\tilde q}}}_l}\left[ n \right] - {{\bf{q}}_{l'}}\left[ n \right]} \right\|^2} + 2{\left( {{{{\bf{\tilde q}}}_l}\left[ n \right] - {{\bf{q}}_{l'}}\left[ n \right]} \right)^T}\left( {{{\bf{q}}_l}\left[ n \right] - {{{\bf{\tilde q}}}_l}\left[ n \right]} \right).
\end{aligned}
 \end{equation}

 As a result, the objective function value of problem \eqref{trajectoryOptimizationProblem} is lower-bounded by that of the following problem for given ${{{\bf{\tilde q}}}_l}\left[ n \right]$ and ${{\tilde \mu }_k}\left[ n \right]$
 \begin{equation}\label{trajectoryOptimizationProblemlb}
 \begin{aligned}
 &\ \mathop {\max }\limits_{\left\{ {{{\bf{q}}_l}\left[ n \right]} \right\}_{n = 1}^N,\left\{ {{\eta _k}\left[ n \right],{\mu _k}\left[ n \right]} \right\},\Gamma } \Gamma \\
 {\rm{s.t.}}&\ \sum\limits_{n = 1}^N {\left( {{\eta _k}\left[ n \right] - {\mu _k}\left[ n \right]} \right)}  \ge \Gamma \ln 2,\ \forall k,\\
 & 1 + \sum\limits_{i = 1}^K {{{\bar P}_i}{{\left[ {{g_{k,i}}\left[ n \right]} \right]}_{{\rm{lb}}}}}  \ge {e^{{\eta _k}\left[ n \right]}},\ \forall k,n,\\
 & 1 + \sum\limits_{i = 1,i \ne k}^K {{{\bar P}_i}{{\left[ {{g_{k,i}}\left[ n \right]} \right]}_{{\rm{ub}}}}}  \le {\left[ {{e^{{\mu _k}\left[ n \right]}}} \right]_{{\rm{lb}}}},\ \forall k,n,\\
 &\ \left\| {{{\bf{q}}_l}\left[ {n + 1} \right] - {{\bf{q}}_l}\left[ n \right]} \right\| \le {{\tilde V}_{\max }},\ \forall n = 1, \cdots ,N - 1,\\
 &\ {\left[ {{{\left\| {{{\bf{q}}_l}\left[ n \right] - {{\bf{q}}_{l'}}\left[ n \right]} \right\|}^2}} \right]_{{\rm{lb}},2}} \ge {d_{\min}^2},\ \forall l' \ne l,n,\\
 &\ {{\bf{q}}_l}\left[ 1 \right] = {{\bf{q}}_{I,l}},\ {{\bf{q}}_l}\left[ N \right] = {{\bf{q}}_{F,l}},\\
 &\ {z_l}\left[ n \right] \ge H,\ \forall n.
 \end{aligned}
 \end{equation}
 It can be seen that problem \eqref{trajectoryOptimizationProblemlb} is a convex optimization problem, and thus can be solved efficiently via CVX.

 With the optimized UAV swarm trajectory $\left\{ {{\bf{q}}_l^{\star}\left[ n \right],\forall l,n} \right\}$, we substitute it into \eqref{nearFieldChannelUEk}, i.e., ${h_{k,l}}\left[ n \right] = $ $ {\alpha _k}\frac{{{r_k}}}{{\left\| {{\bf{q}}_l^ \star \left[ n \right] - {{\bf{w}}_k}} \right\|}}{e^{ - {\rm{j}}\frac{{2\pi }}{\lambda }\left( {\left\| {{\bf{q}}_l^ \star \left[ n \right] - {{\bf{w}}_k}} \right\| - {r_k}} \right)}}$, and an additional MMSE receive beamforming is performed as in \eqref{optimalReceiveBeamforming}.

 The proposed alternating optimization algorithm for solving (P1) is summarized in Algorithm~\ref{alg1}. 
 The computational complexity of Algorithm~\ref{alg1} is analyzed as follows. From step 2 to step 5, the computational complexity for obtaining the UAV swarm trajectory is ${\cal O}( {{I_3}{{(LN)}^3}} )$, where $I_3$ denotes the number of iterations required for convergence. In step 8, the complexity for calculating the receive beamforming is ${\cal O}\left( {NK{L^3}} \right)$. From step 9 to step 11, the complexity is approximately given by ${\cal O}( {{I_4}L{{\left( {NK} \right)}^3}} )$, where $I_4$ is the maximum number of iterations required for convergence. As a result, the total computational complexity of Algorithm~\ref{alg1} is approximately given by ${\cal O}( {I_3}{(LN)^3} + {I_5}NK{L^3} + {I_4}{I_5}L{\left( {NK} \right)^3} )$, where $I_5$ represents the number of iterations required by the alternating optimization algorithm to converge.

 \begin{algorithm}[t]
 \caption{Proposed Alternating Optimization Algorithm for Solving (P1)}
 \label{alg1}
 \begin{algorithmic}[1]
 \STATE Randomly initialize ${{\bf{Q}}^{\left( 0 \right)}}$, and let $j=0$.
 \REPEAT
 \STATE Obtain the UAV swarm trajectory by solving the convex problem \eqref{equiTrajectoryInitializationProblem2}, denoted as ${{\bf{Q}}^{\star}}$.
 \STATE Update $j = j + 1$, and ${{\bf{Q}}^{\left( {j} \right)}} = {{\bf{Q}}^{\star}}$.
 \UNTIL the fractional increase in the objective function value is below a given threshold $ \epsilon_1$.
 \STATE Denote the optimized trajectory as ${\bf{\hat Q}} = \left\{ {{{{\bf{\hat q}}}_l}\left[ n \right],\forall l,n} \right\}$, and initialize ${{\bf{Q}}^{\left( 0 \right)}} = {\bf{\hat Q}}$. Let $b=0$.
 \REPEAT
 \STATE For given ${{\bf{Q}}^{\left( b \right)}}$, obtain the optimal receive beamforming based on \eqref{optimalReceiveBeamforming}, denoted as ${{\bf{V}}^{\left( {b + 1} \right)}}$.
 \STATE \textbf{for} $l = 1:L$ \textbf{do}
 \STATE \quad Solve problem \eqref{trajectoryOptimizationProblemlb} for given ${{\bf{V}}^{\left( {b + 1} \right)}}$ and $\{ {\bf{q}}_1^{\left( {b + 1} \right)}[ n],$\\
 \quad $\cdots ,{\bf{q}}_{l - 1}^{\left( {b + 1} \right)}[n],{\bf{q}}_{l }^{\left( b \right)}[n], \cdots ,{\bf{q}}_L^{\left( b \right)}[n]\}$, and denote the \\
 \quad solution as ${\bf{q}}_{l}^{\left( {b + 1} \right)}\left[ n \right]$.
 \STATE \textbf{end for}
 \STATE Update $b=b+1$.
 \UNTIL the fractional increase in the objective function value is below a given threshold $ \epsilon_2 $.
 \end{algorithmic}
 \end{algorithm}

\section{Numerical Results}\label{sectionNumericalResult}
 In this section, numerical results are provided to evaluate the performance of the proposed trajectory optimization scheme for UAV swarm enabled AMA system. The channel power gain at the reference distance of ${r_0} = 1$ m is set as ${\beta _0} =  - 61.4$ dB, and the noise power is ${\sigma ^2} =  - 94$ dBm. The minimum distance to avoid the collision among UAVs is ${d_{\min}} = 5$ m, and the maximum UAV speed is $V_{\max} = 30$ m/s. Each UAV returns to its initial position at the end of operation time. Unless otherwise stated, $K$ UEs are uniformly distributed in a rectangular area centered at the origin, with its length and width being $D_x = 160$ m and $D_y = 120$ m, respectively. The minimum altitude of all UAVs is $H = 100$ m.

 \subsection{Single UE and Two UEs}

 For the case of single UE, its location is ${\bf{w}} = {\left[ {0,0,0} \right]^T}$ m. Under the conditions when $N=1$ or $N$ is sufficiently large, the UAV swarm trajectory optimization is simplified to placement optimization. Fig.~\ref{fig:optimizedPlacement} shows the top view of the optimized UAV swarm placement positions, and the optimized $z$-coordinate values of all UAVs are both equal to $H = 100$ m for the cases of $L = 7$ and $L = 31$. The transmit power is $P= 30$ dBm. It is observed that when $L = 7$, the UAVs form a regular hexagon in the $x$-$y$ plane, where the UAVs are located at the center and vertices of the regular hexagon, and the distance between neighboring UAVs is $d_{\min} = 5$ m. Moreover, for the case of $L = 31$ UAVs, the UAVs form a cellular-like array architecture. The observation provides a strategy for UAV swarm placement in the single UE case, i.e., deploying them at the center and vertices of multiple regular hexagons in the form of cellular array architecture.


 \begin{figure}
 \centering
 \subfigure[$L= 7$]{
 \begin{minipage}[t]{0.5\textwidth}
 \centering
 \centerline{\includegraphics[width=2.66in,height=2.0in]{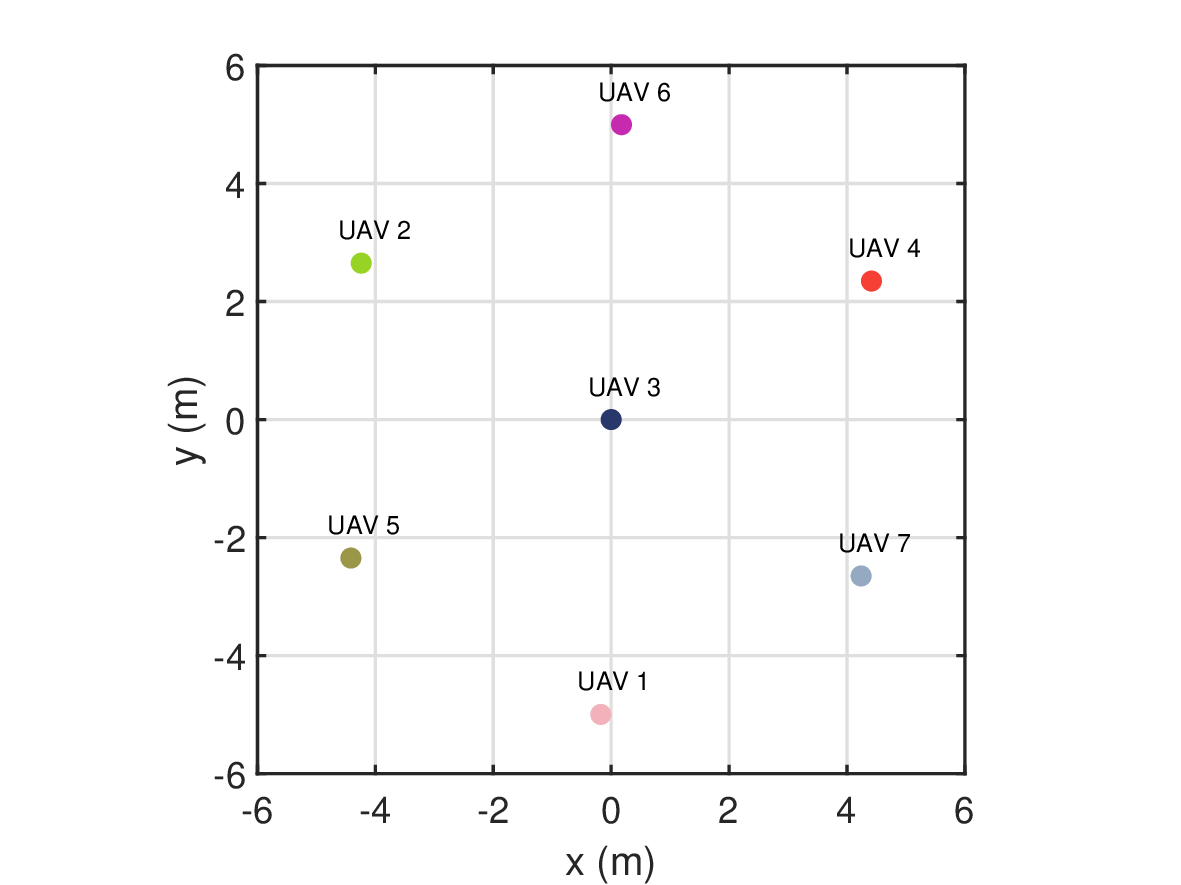}}
 \end{minipage}
 }
 \subfigure[$L= 31$]{
 \begin{minipage}[t]{0.5\textwidth}
 \centering
 \centerline{\includegraphics[width=2.66in,height=2.0in]{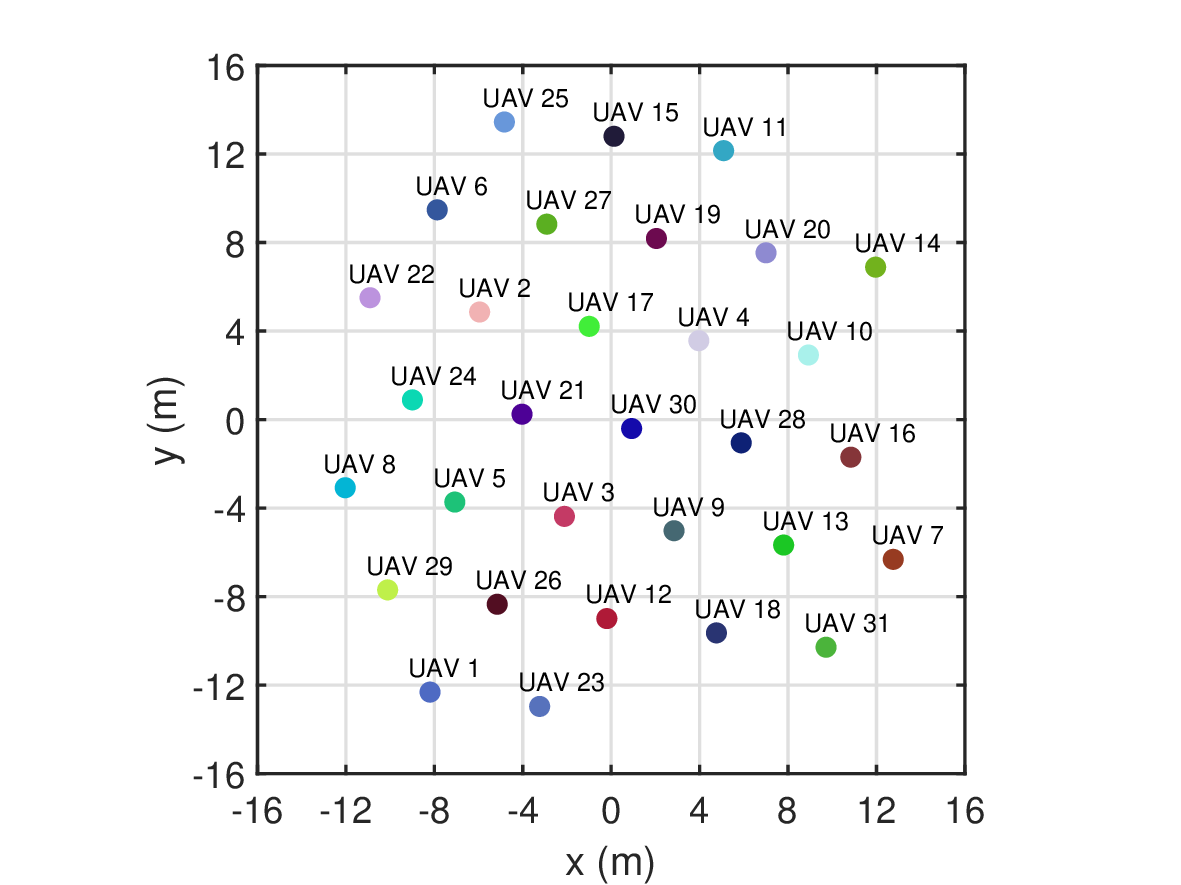}}
 \end{minipage}
 }
 \caption{The optimized UAV swarm placement positions (top view).}
 \label{fig:optimizedPlacement}
 \end{figure}

 \begin{figure}[!t]
 \centering
 \centerline{\includegraphics[width=3.0in,height=2.25in]{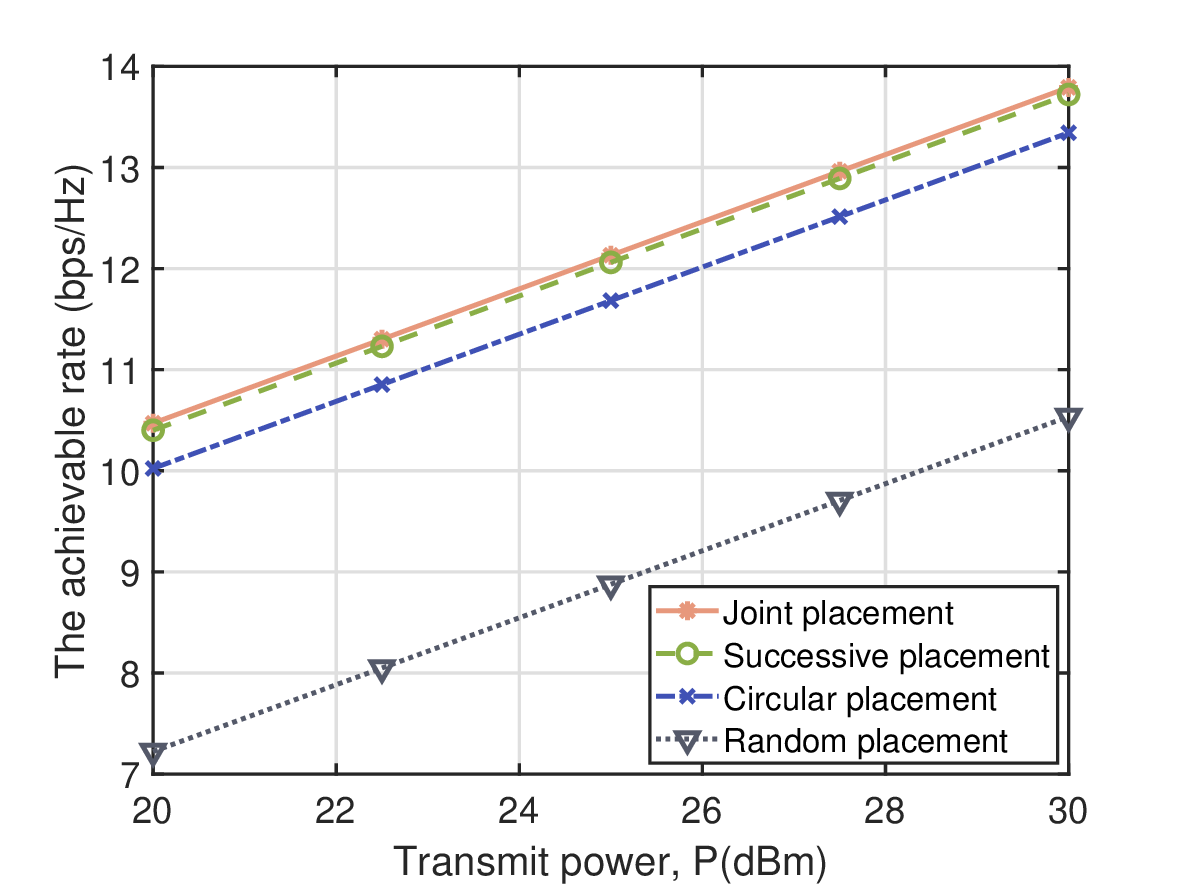}}
 \caption{The achievable rate versus the transmit power for a single UE.}
 \label{fig:achievableRateVersusTransmitPowerSingleUE}
 \end{figure}

 Fig.~\ref{fig:achievableRateVersusTransmitPowerSingleUE} shows the achievable rate versus the transmit power for a single UE. The number of UAVs is $L= 80$. For comparison, we consider the following two benchmark schemes: (1) Circular placement: the UAV swarm is arranged in a circular formation, and adjacent UAVs are separated by $d_{\min}$. The placement position of UAV $l$, $1\le l \le L$, is given by
 \begin{equation}\label{circularPlacement}
 {{\bf{q}}_l} = {\left[ {R\cos \left( {\frac{{2\pi }}{L}\left( {l - 1} \right)} \right),R\sin \left( {\frac{{2\pi }}{L}\left( {l - 1} \right)} \right),H} \right]^T},
 \end{equation}
 where $R = \frac{{{d_{\min }}}}{{2\sin \left( {\pi /L} \right)}}$ is the radius of the circular array; (2) Random placement: the UAV swarm is arranged in a random formation that satisfies the collision avoidance and minimum altitude constraints in \eqref{OptimizationProblemSingleUEJoint}. It is observed that both the joint placement scheme from \eqref{OptimizationProblemSingleUEJoint} and successive placement scheme from \eqref{OptimizationProblemSingleUESuccessive} outperform the two benchmark schemes, as expected. Moreover, the successive placement scheme achieves a comparable performance to the joint placement scheme, while maintaining a lower computational complexity, which verifies its effectiveness in balancing the performance and complexity.

 \begin{figure}[!t]
 \centering
 \centerline{\includegraphics[width=3.0in,height=2.25in]{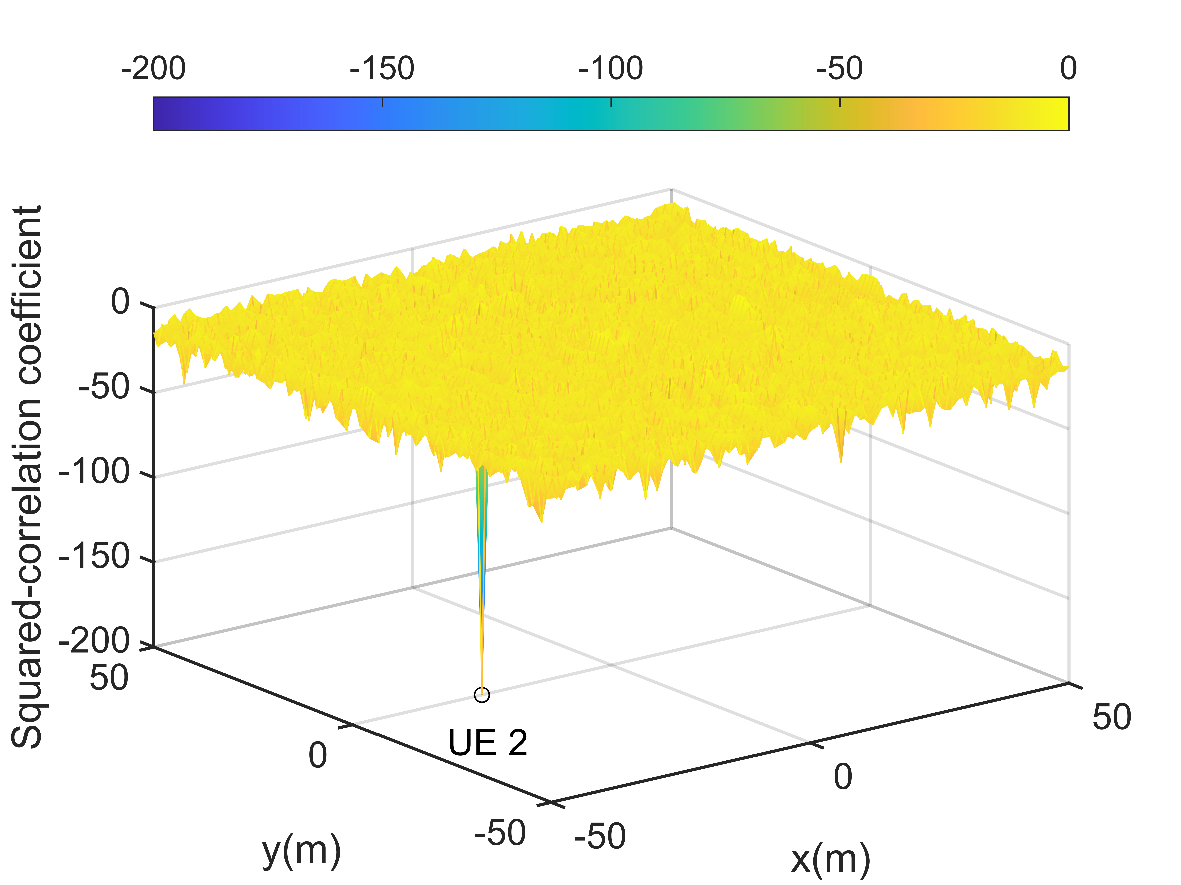}}
 \caption{Squared-correlation coefficient of UE 1 versus arbitrary location.}
 \label{fig:squaredCorrelationCoefficient}
 \end{figure}

 Next, for the case of two UEs, their locations are ${\bf{w}}_1 = {\left[ {25,0,0} \right]^T}$ m and ${\bf{w}}_2 = {\left[ {-25,0,0} \right]^T}$ m, respectively, and the transmit power is $P_k =30$ dBm, $k=1,2$. The number of UAVs is $L = 16$. Fig.~\ref{fig:squaredCorrelationCoefficient} shows the squared-correlation coefficient of UE 1 versus arbitrary location, where the squared-correlation coefficient below $-200$ dB is truncated to $-200$ dB for ease of presentation. It is observed that the proposed placement scheme tailored to two UEs in Section~\ref{subSectionTwoUEs} achieves a zero squared-correlation coefficient between the channels of UE 1 and UE 2, which implies that UE 1 is free from the IUI caused by UE 2, while incurring no SNR loss (as can be seen in \eqref{SINRTwoUE}). Moreover, the squared-correlation coefficient of UE 2 versus arbitrary location can be showed and is omitted for brevity. Similarly, UE 2 can also achieve the SNR ${{\bar P}_2}{\left\| {{{\bf{h}}_2}} \right\|^2}$ as if there is no IUI, thanks to the flexible placement position adjustment of UAV swarm to orthogonalize the channels of the two desired UEs.

 \subsection{Arbitrary Number of UEs}

 In this subsection, we consider the arbitrary number of UEs, with $L =4$ and $K =4$. The initial and final positions of UAVs are ${{\bf{q}}_{I,1}} = {{\bf{q}}_{F,1}} = {\left[ {80,60,100} \right]^T}$ m, ${{\bf{q}}_{I,2}} = {{\bf{q}}_{F,2}} = {\left[ {-80,60,100} \right]^T}$ m, ${{\bf{q}}_{I,3}} = {{\bf{q}}_{F,3}} = {\left[ {-80,-60,100} \right]^T}$ m, and ${{\bf{q}}_{I,4}} = {{\bf{q}}_{F,4}} = {\left[ {80,-60,100} \right]^T}$ m, respectively. 
 Fig.~\ref{fig:illustrationAMAMultiUECommunication} illustrates the UAV swarm enabled AMA multi-UE communication, with the locations of four UEs given by ${{\bf{w}}_1} = {\left[ {40,30,0} \right]^T}$ m, ${{\bf{w}}_2} = {\left[ {-40,30,0} \right]^T}$ m, ${{\bf{w}}_3} = {\left[ {-40,-30,0} \right]^T}$ m, and ${{\bf{w}}_4} = {\left[ {40,-30,0} \right]^T}$ m, respectively. It is observed from Fig.~\ref{fig:illustrationAMAMultiUECommunication}(a) that the UAVs firstly move at a high speed, as reflected by a larger spacing between adjacent trajectory points, and then slightly adjust their trajectories around the geometric center of the UEs. This is attributed to the fact that the UAV swarm can achieve a larger communication rate at these positions, as can be seen in Fig.~\ref{fig:illustrationAMAMultiUECommunication}(b). Thus, a higher UAV speed at the beginning is preferred to exploit favorable channel conditions for improving the minimum average communication rate. Moreover, the UAVs fastly return to the final positions at the end of operation time, and a substantial degradation in the achievable rates for all UEs is observed.


 \begin{figure}
 \centering
 \subfigure[UAV swarm trajectory, where solid dots denote the initial/final\newline positions of UAVs. The positions of UAVs' trajectories at each\newline time slot are marked by the symbol '$\triangle$']{
 \begin{minipage}[t]{0.5\textwidth}
 \centering
 \centerline{\includegraphics[width=3.0in,height=2.25in]{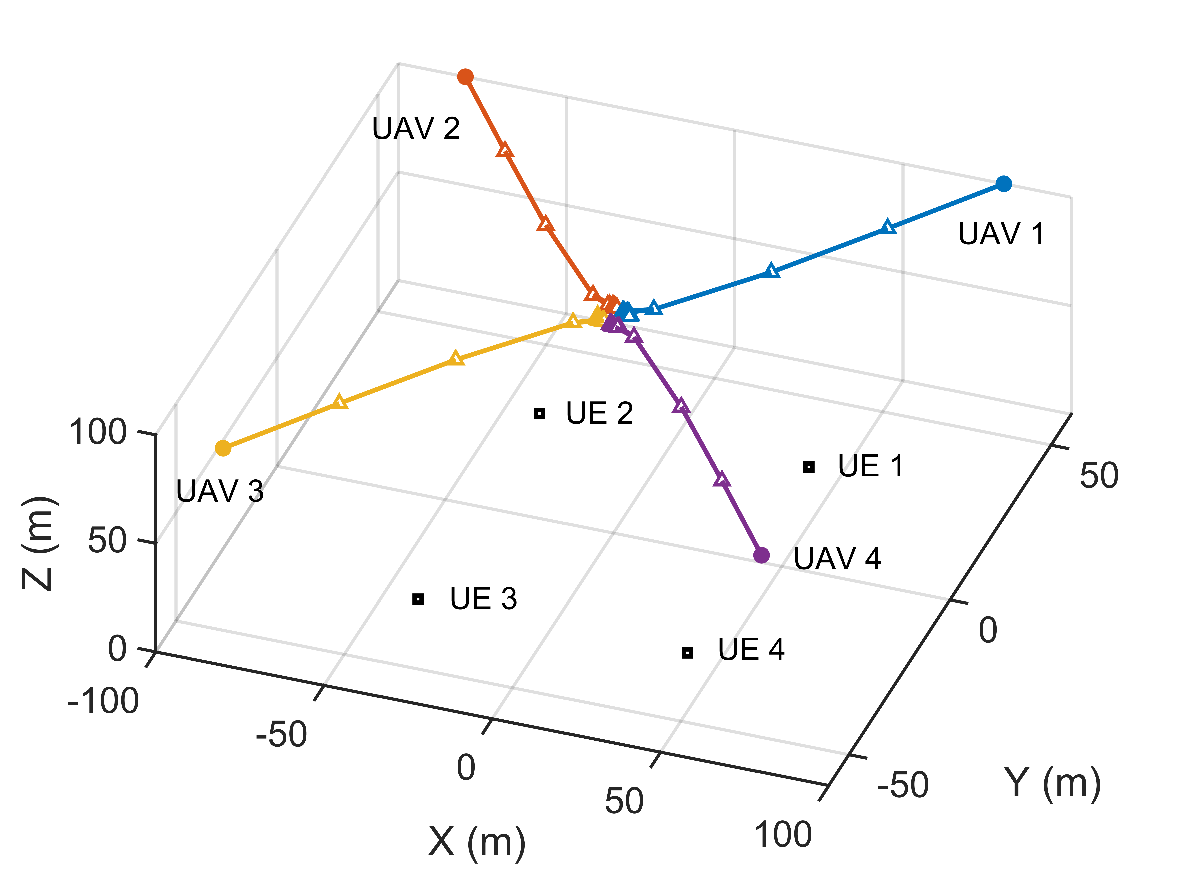}}
 \end{minipage}
 }
 \subfigure[Achievable rate versus the time slot for different UEs]{
 \begin{minipage}[t]{0.5\textwidth}
 \centering
 \centerline{\includegraphics[width=3.0in,height=2.25in]{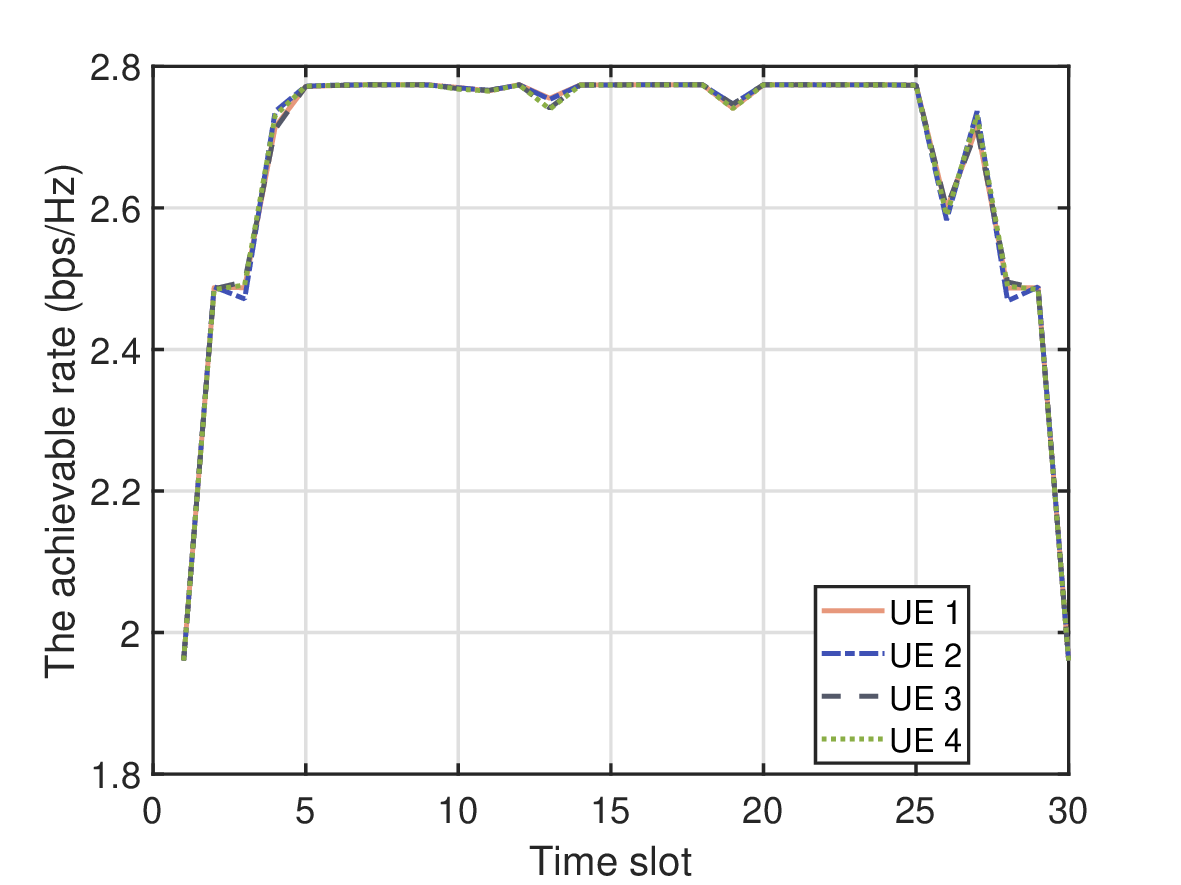}}
 \end{minipage}
 }
 \caption{An illustration of UAV swarm enabled AMA multi-UE communication.}
 \label{fig:illustrationAMAMultiUECommunication}
 \end{figure}

  \begin{figure}[!t]
 \centering
 \centerline{\includegraphics[width=3.0in,height=2.25in]{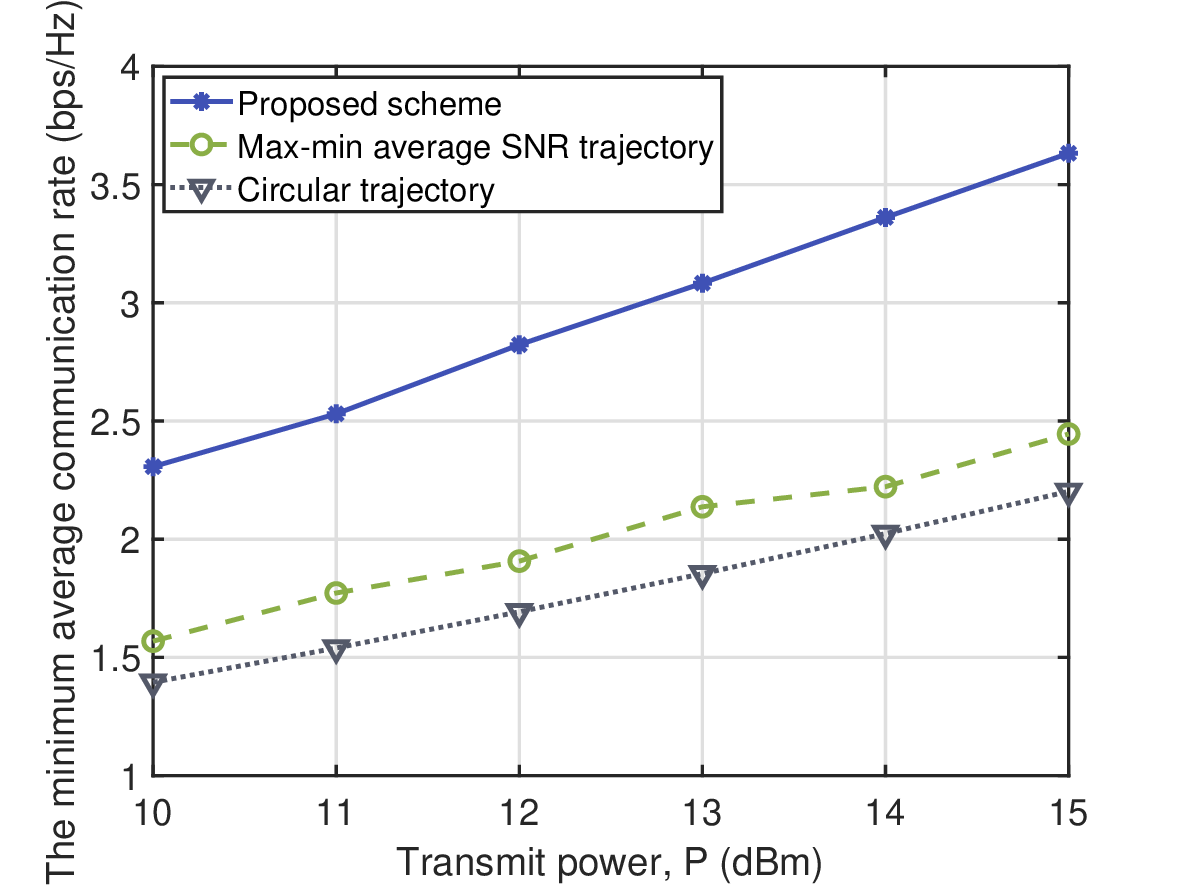}}
 \caption{The minimum average communication rate versus the transmit power of each UE.}
 \label{fig:achievableRateVersusTransmitPower}
 \end{figure}

 Fig.~\ref{fig:achievableRateVersusTransmitPower} shows the minimum average communication rate versus the transmit power of each UE. As a comparison, the following two benchmark schemes are considered: (1) Circular trajectory: the UAV swarm is arranged in a circular formation and follows a circular trajectory. The trajectory of UAV $l$ is given by
 \begin{equation}\label{circularTrajectory}
 \begin{aligned}
 {{\bf{q}}_l}\left[ n \right] &= \left[ {R\cos \left( {\frac{{2\pi }}{{N - 1}}\left( {n - 1} \right) + {\theta _l}} \right),} \right.\\
 &\ \ \ \ \ \ \ \ {\left. {R\sin \left( {\frac{{2\pi }}{{N - 1}}\left( {n - 1} \right) + {\theta_l}} \right),H} \right]^T},
 \end{aligned}
 \end{equation}
 where $R$ is the radius and given below \eqref{circularPlacement}, and ${\theta _l} = \frac{{2\pi }}{L}\left( {l - 1} \right)$ denotes the initial angle (in radians) of UAV $l$ relative to $x$-axis; (2) Max-min average SNR trajectory: the UAV swarm trajectory is obtained in Section~\ref{subSectionUAVTrajectoryInitialization}. It is observed that the proposed scheme yields a better performance than the benchmark scheme based on the max-min average SNR trajectory that only considers its impact on the channel amplitude. This demonstrates the necessity of taking into account the impact of trajectory on both the channel amplitude and phase in the UAV swarm enabled near-field AMA communication. It is also observed that the proposed scheme significantly outperforms the benchmarking circular trajectory, since the former strikes a good balance between the desired signal enhancement and IUI mitigation during the operation time, thanks to the additional design DoF of UAV swarm trajectory.

 \begin{figure}[!t]
 \centering
 \centerline{\includegraphics[width=3.0in,height=2.25in]{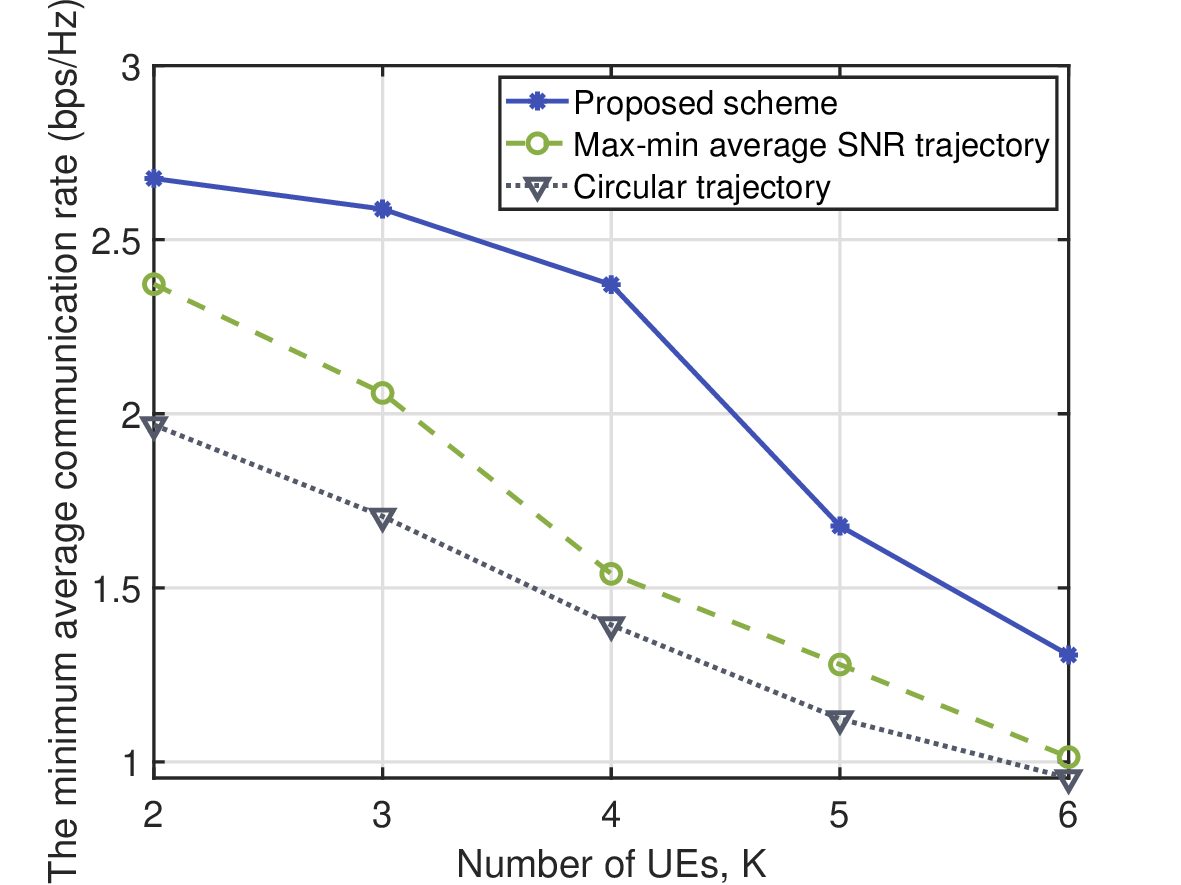}}
 \caption{The minimum average communication rate versus the number of UEs.}
 \label{fig:achievableRateVersusUENumber}
 \end{figure}

 Fig.~\ref{fig:achievableRateVersusUENumber} shows the minimum average communication rate versus the number of UEs $K$. As can be observed, the proposed scheme is superior to the two benchmark schemes. This is expected since the proposed scheme can leverage the channel variations by dynamically adjusting the UAV swarm trajectory, so as to improve the communication performance during the operation time. By contrast, the benchmark scheme based on the max-min average SNR trajectory only optimizes the desired signal power, while neglecting both IUI and the impact of UAV swarm trajectory on the channel phase, and such a limitation leads to an inferior performance than the proposed scheme. Moreover, the minimum average communication rates of all the schemes exhibit a decreasing trend as the number of UEs increases. This degradation occurs because the IUI becomes more pronounced as $K$ increases, and even the proposed scheme cannot ideally mitigate the severe IUI due to the limited spatial DoF. Fortunately, in contrast to conventional cellular networks with the fixed antenna configuration, additional UAVs can be flexibly deployed into the AMA system for an enhanced IUI mitigation capability.


%
%
%
%
\section{Conclusion}\label{sectionConclusion}
 This paper investigated the performance of near-field AMA system enabled by UAV swarm cooperation. To maximize the minimum average communication rate over UEs, we jointly optimized the 3D UAV swarm trajectory and receive beamforming for UEs, subject to practical constraints on UAVs including maximum speed, collision avoidance, as well as initial and final positions. To draw useful insights, the special cases of single UE and two UEs were studied, respectively. In particular, it was shown that an IUI-free communication for two UEs can be achieved by symmetrically placing an even number of UAVs along a hyperbola, with two UEs located at its foci. Furthermore, an alternating optimization was proposed to efficiently tackle the non-convex optimization problem for arbitrary number of UEs. Numerical results demonstrated that the proposed scheme significantly outperforms various benchmark schemes.

\begin{appendices}
\section{Proof of Proposition~\ref{UAVPlacementL2Proposition}}\label{proofofUAVPlacementL2Proposition}
It is observed from \eqref{optimizationProblemL2} that we can set $y_1 = y_2 = 0$ without loss of optimality. Let ${\left[ {x,0,H} \right]^T}$ and ${\left[ {x + {d_{\min }} ,0,H} \right]^T}$ be the positions of two UAVs, and the objective function of \eqref{optimizationProblemL2} can be equivalently expressed as
 \begin{equation}
 h\left( x \right) = \frac{1}{{{x^2} + {H^2}}} + \frac{1}{{{{\left( {x + {d_{\min }}} \right)}^2} + {H^2}}}.
 \end{equation}
 By letting the first-order derivative of $h\left( x \right)$ be zero, and after some manipulations, we have
 \begin{equation}\label{zeroFirstOrderDerivative}
 \begin{aligned}
 y\left[ {{y^4} + 2d_{\min }^2\left( {{\zeta ^2} + \frac{1}{4}} \right){y^2} + d_{\min }^4\left( {{\zeta ^2} + \frac{1}{4}} \right)\left( {{\zeta ^2} - \frac{3}{4}} \right)} \right] = 0,
 \end{aligned}
 \end{equation}
 where $y \triangleq x + {d_{\min }}/2$.

 \emph{Case 1:} $\zeta  > \sqrt 3 /2$. In this case, the terms in brackets of \eqref{zeroFirstOrderDerivative} is larger than zero. The optimal solution is $y^{\star} = 0$, i.e, $x^{\star} = -d_{\min}/2$, and the two UAVs are placed symmetrically w.r.t. the UE's horizontal position.

 \emph{Case 2:} $\zeta  \le \sqrt 3 /2$. In this case, by letting the terms in brackets of \eqref{zeroFirstOrderDerivative} be zero, we have
 \begin{equation}
 y =  \pm {d_{\min }}\sqrt {\sqrt {{\zeta ^2} + \frac{1}{4}}  - \left( {{\zeta ^2} + \frac{1}{4}} \right)}.
 \end{equation}
 It can be verified that the maximum objective function is achieved at the two points, which corresponds to
 \begin{equation}
 {x^ \star } =  - \frac{{{d_{\min }}}}{2} \pm {d_{\min }}\sqrt {\sqrt {{\zeta ^2} + \frac{1}{4}}  - \left( {{\zeta ^2} + \frac{1}{4}} \right)} .
 \end{equation}
 By integrating the two cases, the proof of Proposition~\ref{UAVPlacementL2Proposition} is thus completed.

\section{Proof of Proposition~\ref{zerocorrelationCoefficientProposition}}\label{proofofzerocorrelationCoefficientProposition}
 To show Proposition~\ref{zerocorrelationCoefficientProposition}, the $L$ (even) UAVs are first matched in pairs. Specifically, for given UAV $a$'s placement position, ${{\bf{q}}_a} = {\left[ {{x_a},0,{z_a}} \right]^T}$, $a \in \left\{ {1, \cdots ,L/2} \right\}$, the position of its matched UAV $a'$ is set as ${{\bf{q}}_{a'}} = {\left[ { - {x_a},0,{z_a}} \right]^T}$, with $a' = a + L/2$. It then follows that
 ${r_{1,a}} = {r_{2,a'}} = \sqrt {{{\left( {{x_a} - X} \right)}^2} + z_a^2} $, and ${r_{2,a}} = {r_{1,a'}} = \sqrt {{{\left( {{x_a} + X} \right)}^2}  + {z_a^2} }$. As such, ${\rho _{k,k'}}$ can be expressed as
 \begin{equation}\label{correlationCoefficientPair}
 \begin{aligned}
 {\rho _{k,k'}} &= \frac{{{{\left| {\sum\limits_{a = 1}^{L/2} {\frac{{{r_k}{r_{k'}}}}{{{r_{k,a}}{r_{k',a}}}}\left( {{e^{{\rm{j}}\frac{{2\pi }}{\lambda }\left( {{r_{k,a}} - {r_{k',a}}} \right)}} + {e^{ - {\rm{j}}\frac{{2\pi }}{\lambda }\left( {{r_{k,a}} - {r_{k',a}}} \right)}}} \right)} } \right|}^2}}}{{{{\left\| {{{\bf{a}}_k}} \right\|}^2}{{\left\| {{{\bf{a}}_{k'}}} \right\|}^2}}} \\
 &= \frac{{{{\left| {\sum\limits_{a = 1}^{L/2} {\frac{{2{r_k}{r_{k'}}}}{{{r_{k,a}}{r_{k',a}}}}\cos \left( {\frac{{2\pi }}{\lambda }\left( {{r_{k,a}} - {r_{k',a}}} \right)} \right)} } \right|}^2}}}{{{{\left\| {{{\bf{a}}_k}} \right\|}^2}{{\left\| {{{\bf{a}}_{k'}}} \right\|}^2}}}.
 \end{aligned}
 \end{equation}

 In particular, \eqref{correlationCoefficientPair} is equal to zero if the following condition is satisfied
 \begin{equation}\label{sufficientCondition}
 \frac{{2\pi }}{\lambda }\left( {{r_{k,a}} - {r_{k',a}}} \right) = \frac{\pi }{2} + \nu \pi,\ \nu  \in {\mathbb Z},\ \forall a,
 \end{equation}
 which can be equivalently written as
 \begin{equation}\label{sufficientCondition2}
 {r_{k,a}} - {r_{k',a}} = \frac{{\left( {2\nu  + 1} \right)\lambda }}{4},\ \nu  \in {\mathbb Z},\ \forall a.
 \end{equation}
 A closer look at \eqref{sufficientCondition2} shows that UAV $l$ is in fact located on a hyperbola, whose foci are the locations of two UEs. Moreover, the hyperbolic equation is
 \begin{equation}
 \frac{{x_a^2}}{{{{\left( {\frac{{\left( {2\nu  + 1} \right)\lambda }}{8}} \right)}^2}}} - \frac{{z_a^2}}{{{X^2} - {{\left( {\frac{{\left( {2\nu  + 1} \right)\lambda }}{8}} \right)}^2}}} = 1.
 \end{equation}
 Then, we have
 \begin{equation}\label{hyperbolicEquation}
 {x_a} =  \pm \frac{{\left( {2\nu  + 1} \right)\lambda }}{8}\sqrt {\frac{{z_a^2}}{{{X^2} - {{\left( {\frac{{\left( {2\nu  + 1} \right)\lambda }}{8}} \right)}^2}}} + 1}.
 \end{equation}
 This thus completes the proof of Proposition~\ref{zerocorrelationCoefficientProposition}.

\section{Proof of Lemma \ref{quadraticSurrogatelemma}}\label{proofOfquadraticSurrogatelemma}
 With ${f_{k,i,l}}\left[ n \right]$ given in \eqref{fkil}, its gradient over ${{\bf{q}}_l}\left[ n \right]$, denoted as $\nabla {f_{k,i,l}}\left[ n \right] = {\left[ {\frac{{\partial {f_{k,i,l}}\left[ n \right]}}{{\partial {x_l}\left[ n \right]}},\frac{{\partial {f_{k,i,l}}\left[ n \right]}}{{\partial {y_l}\left[ n \right]}},\frac{{\partial {f_{k,i,l}}\left[ n \right]}}{{\partial {z_l}\left[ n \right]}}} \right]^T}$, is given by \eqref{firstOrderDerivative}, where ${\zeta _{k,i,l,l'}}\left[ n \right] \triangleq \frac{{2{{\left| {{\alpha _i}} \right|}^2}\left| {{V_{k,l,l'}}\left[ n \right]} \right|r_i^2}}{{{{\hat r}_{i,l}}\left[ n \right]{{\hat r}_{i,l'}}\left[ n \right]}}$.

 \newcounter{mytempeqncnt1}
 \begin{figure*}
 \normalsize
 \setcounter{mytempeqncnt1}{\value{equation}}
 \begin{subequations}\label{firstOrderDerivative}
 \begin{align}
 \frac{{\partial {f_{k,i,l}}\left[ n \right]}}{{\partial {x_l}\left[ n \right]}} =  - \frac{{2\pi }}{\lambda }\sum\limits_{l' = 1,l' \ne l}^L {{\zeta _{k,i,l,l'}}\left[ n \right]\frac{{{x_l}\left[ n \right] - {x_i}}}{{\left\| {{{\bf{q}}_l}\left[ n \right] - {{\bf{w}}_i}} \right\|}}} \sin \left( {\frac{{2\pi }}{\lambda }\left( {\left\| {{{\bf{q}}_l}\left[ n \right] - {{\bf{w}}_i}} \right\| - \left\| {{{\bf{q}}_{l'}}\left[ n \right] - {{\bf{w}}_i}} \right\|} \right) + \angle {V_{k,l,l'}}\left[ n \right]} \right),\\
 \frac{{\partial {f_{k,i,l}}\left[ n \right]}}{{\partial {y_l}\left[ n \right]}} =  - \frac{{2\pi }}{\lambda }\sum\limits_{l' = 1,l' \ne l}^L {{\zeta _{k,i,l,l'}}\left[ n \right]\frac{{{y_l}\left[ n \right] - {y_i}}}{{\left\| {{{\bf{q}}_l}\left[ n \right] - {{\bf{w}}_i}} \right\|}}} \sin \left( {\frac{{2\pi }}{\lambda }\left( {\left\| {{{\bf{q}}_l}\left[ n \right] - {{\bf{w}}_i}} \right\| - \left\| {{{\bf{q}}_{l'}}\left[ n \right] - {{\bf{w}}_i}} \right\|} \right) + \angle {V_{k,l,l'}}\left[ n \right]} \right),\\
 \frac{{\partial {f_{k,i,l}}\left[ n \right]}}{{\partial {z_l}\left[ n \right]}} =  - \frac{{2\pi }}{\lambda }\sum\limits_{l' = 1,l' \ne l}^L {{\zeta _{k,i,l,l'}}\left[ n \right]\frac{{{z_l}\left[ n \right] - {z_i}}}{{\left\| {{{\bf{q}}_l}\left[ n \right] - {{\bf{w}}_i}} \right\|}}} \sin \left( {\frac{{2\pi }}{\lambda }\left( {\left\| {{{\bf{q}}_l}\left[ n \right] - {{\bf{w}}_i}} \right\| - \left\| {{{\bf{q}}_{l'}}\left[ n \right] - {{\bf{w}}_i}} \right\|} \right) + \angle {V_{k,l,l'}}\left[ n \right]} \right).
 \end{align}
 \end{subequations}
 \hrulefill
 \end{figure*}
 Then, the Hessian matrix of ${f_{k,i,l}}\left[ n \right]$ over ${{\bf{q}}_l}\left[ n \right]$ is
 \begin{equation}\label{HessianMatrix}
 {\nabla ^2}{f_{k,i,l}}\left[ n \right]  = \left[ {\begin{array}{*{20}{c}}
 {\frac{{{\partial ^2}{f_{k,i,l}}\left[ n \right]}}{{\partial {x_l}\left[ n \right]\partial {x_l}\left[ n \right]}}}&{\frac{{{\partial ^2}{f_{k,i,l}}\left[ n \right]}}{{\partial {x_l}\left[ n \right]\partial {y_l}\left[ n \right]}}}&{\frac{{{\partial ^2}{f_{k,i,l}}\left[ n \right]}}{{\partial {x_l}\left[ n \right]\partial {z_l}\left[ n \right]}}}\\
 {\frac{{{\partial ^2}{f_{k,i,l}}\left[ n \right]}}{{\partial {y_l}\left[ n \right]\partial {x_l}\left[ n \right]}}}&{\frac{{{\partial ^2}{f_{k,i,l}}\left[ n \right]}}{{\partial {y_l}\left[ n \right]\partial {y_l}\left[ n \right]}}}&{\frac{{{\partial ^2}{f_{k,i,l}}\left[ n \right]}}{{\partial {y_l}\left[ n \right]\partial {z_l}\left[ n \right]}}}\\
 {\frac{{{\partial ^2}{f_{k,i,l}}\left[ n \right]}}{{\partial {z_l}\left[ n \right]\partial {x_l}\left[ n \right]}}}&{\frac{{{\partial ^2}{f_{k,i,l}}\left[ n \right]}}{{\partial {z_l}\left[ n \right]\partial {y_l}\left[ n \right]}}}&{\frac{{{\partial ^2}{f_{k,i,l}}\left[ n \right]}}{{\partial {z_l}\left[ n \right]\partial {z_l}\left[ n \right]}}}
 \end{array}} \right],
 \end{equation}
 where the second-order partial derivatives $\frac{{{\partial ^2}{f_{k,i,l}}\left[ n \right]}}{{\partial {x_l}\left[ n \right]\partial {x_l}\left[ n \right]}}$, $\frac{{{\partial ^2}{f_{k,i,l}}\left[ n \right]}}{{\partial {x_l}\left[ n \right]\partial {y_l}\left[ n \right]}}$, and $\frac{{{\partial ^2}{f_{k,i,l}}\left[ n \right]}}{{\partial {x_l}\left[ n \right]\partial {z_l}\left[ n \right]}}$ are given by \eqref{secondOrderDerivative}, and the remaining terms can be similarly obtained, which are omitted for brevity.
 \newcounter{mytempeqncnt2}
 \begin{figure*}
 \normalsize
 \setcounter{mytempeqncnt2}{\value{equation}}
 \begin{subequations}\label{secondOrderDerivative}
 \begin{align}
 &\frac{{{\partial ^2}{f_{k,i,l}}\left[ n \right]}}{{\partial {x_l}\left[ n \right]\partial {x_l}\left[ n \right]}} =  - \frac{{2\pi }}{\lambda }\sum\limits_{l' = 1,l' \ne l}^L {{\zeta _{k,i,l,l'}}\left[ n \right]\left[ {\frac{{\left\| {{{\bf{q}}_l}\left[ n \right] - {{\bf{w}}_i}} \right\| - \frac{{{{\left( {{x_l}\left[ n \right] - {x_i}} \right)}^2}}}{{\left\| {{{\bf{q}}_l}\left[ n \right] - {{\bf{w}}_i}} \right\|}}}}{{{{\left\| {{{\bf{q}}_l}\left[ n \right] - {{\bf{w}}_i}} \right\|}^2}}}\sin \left( {\frac{{2\pi }}{\lambda }\left( {\left\| {{{\bf{q}}_l}\left[ n \right] - {{\bf{w}}_i}} \right\| - \left\| {{{\bf{q}}_{l'}}\left[ n \right] - {{\bf{w}}_i}} \right\|} \right) + \angle {V_{k,l,l'}}\left[ n \right]} \right)} \right.}\notag \\
 &\ \ \ \ \ \ \ \ \ \ \ \ \ \ \ \ \ \ \ \ \ \ \ \ \ \ \ \ \ \left. { + \frac{{2\pi }}{\lambda }\frac{{{{\left( {{x_l}\left[ n \right] - {x_i}} \right)}^2}}}{{{{\left\| {{{\bf{q}}_l}\left[ n \right] - {{\bf{w}}_i}} \right\|}^2}}}\cos \left( {\frac{{2\pi }}{\lambda }\left( {\left\| {{{\bf{q}}_l}\left[ n \right] - {{\bf{w}}_i}} \right\| - \left\| {{{\bf{q}}_{l'}}\left[ n \right] - {{\bf{w}}_i}} \right\|} \right) + \angle {V_{k,l,l'}}\left[ n \right]} \right)} \right],\\
 &\frac{{{\partial ^2}{f_{k,i,l}}\left[ n \right]}}{{\partial {x_l}\left[ n \right]\partial {y_l}\left[ n \right]}} =  - \frac{{2\pi }}{\lambda }\sum\limits_{l' = 1,l' \ne l}^L {{\zeta _{k,i,l,l'}}\left[ n \right]\left[ { - \frac{{\left( {{x_l}\left[ n \right] - {x_i}} \right)\left( {{y_l}\left[ n \right] - {y_i}} \right)}}{{{{\left\| {{{\bf{q}}_l}\left[ n \right] - {{\bf{w}}_i}} \right\|}^3}}}\sin \left( {\frac{{2\pi }}{\lambda }\left( {\left\| {{{\bf{q}}_l}\left[ n \right] - {{\bf{w}}_i}} \right\| - \left\| {{{\bf{q}}_{l'}}\left[ n \right] - {{\bf{w}}_i}} \right\|} \right) + \angle {V_{k,l,l'}}\left[ n \right]} \right)} \right.}\notag \\
 &\ \ \ \ \ \ \ \ \ \ \ \ \ \ \ \ \ \ \ \ \ \ \ \ \ \ \ \ \ \left. { + \frac{{2\pi }}{\lambda }\frac{{\left( {{x_l}\left[ n \right] - {x_i}} \right)\left( {{y_l}\left[ n \right] - {y_i}} \right)}}{{{{\left\| {{{\bf{q}}_l}\left[ n \right] - {{\bf{w}}_i}} \right\|}^2}}}\cos \left( {\frac{{2\pi }}{\lambda }\left( {\left\| {{{\bf{q}}_l}\left[ n \right] - {{\bf{w}}_i}} \right\| - \left\| {{{\bf{q}}_{l'}}\left[ n \right] - {{\bf{w}}_i}} \right\|} \right) + \angle {V_{k,l,l'}}\left[ n \right]} \right)} \right],\\
 &\frac{{{\partial ^2}{f_{k,i,l}}\left[ n \right]}}{{\partial {x_l}\left[ n \right]\partial {z_l}\left[ n \right]}} =  - \frac{{2\pi }}{\lambda }\sum\limits_{l' = 1,l' \ne l}^L {{\zeta _{k,i,l,l'}}\left[ n \right]\left[ { - \frac{{\left( {{x_l}\left[ n \right] - {x_i}} \right)\left( {{z_l}\left[ n \right] - {z_i}} \right)}}{{{{\left\| {{{\bf{q}}_l}\left[ n \right] - {{\bf{w}}_i}} \right\|}^3}}}\sin \left( {\frac{{2\pi }}{\lambda }\left( {\left\| {{{\bf{q}}_l}\left[ n \right] - {{\bf{w}}_i}} \right\| - \left\| {{{\bf{q}}_{l'}}\left[ n \right] - {{\bf{w}}_i}} \right\|} \right) + \angle {V_{k,l,l'}}\left[ n \right]} \right)} \right.}\notag \\
 &\ \ \ \ \ \ \ \ \ \ \ \ \ \ \ \ \ \ \ \ \ \ \ \ \ \ \ \ \ \left. { + \frac{{2\pi }}{\lambda }\frac{{\left( {{x_l}\left[ n \right] - {x_i}} \right)\left( {{z_l}\left[ n \right] - {z_i}} \right)}}{{{{\left\| {{{\bf{q}}_l}\left[ n \right] - {{\bf{w}}_i}} \right\|}^2}}}\cos \left( {\frac{{2\pi }}{\lambda }\left( {\left\| {{{\bf{q}}_l}\left[ n \right] - {{\bf{w}}_i}} \right\| - \left\| {{{\bf{q}}_{l'}}\left[ n \right] - {{\bf{w}}_i}} \right\|} \right) + \angle {V_{k,l,l'}}\left[ n \right]} \right)} \right].
 \end{align}
 \end{subequations}
 \hrulefill
 \end{figure*}

 \newcounter{mytempeqncnt3}
 \begin{figure*}
 \normalsize
 \setcounter{mytempeqncnt3}{\value{equation}}
 \begin{equation}\label{squaredFrobeniusNorm}
 \begin{aligned}
 &\left\| {{\nabla ^2}{f_{k,i,l}}\left[ n \right]} \right\|_F^2 = {\left( {\frac{{{\partial ^2}{f_{k,i,l}}\left[ n \right]}}{{\partial {x_l}\left[ n \right]\partial {x_l}\left[ n \right]}}} \right)^2} +  \cdots  + {\left( {\frac{{{\partial ^2}{f_{k,i,l}}\left[ n \right]}}{{\partial {z_l}\left[ n \right]\partial {z_l}\left[ n \right]}}} \right)^2} \le {\left( {\frac{2{\pi }}{{\lambda}}\sum\limits_{l' = 1,l' \ne l}^L {{\zeta _{k,i,l,l'}}\left[ n \right]} } \right)^2} \times \\
 &\left[ {{{\left( {\frac{{\left\| {{{\bf{q}}_l}\left[ n \right] - {{\bf{w}}_i}} \right\| - \frac{{{{\left( {{x_l}\left[ n \right] - {x_i}} \right)}^2}}}{{\left\| {{{\bf{q}}_l}\left[ n \right] - {{\bf{w}}_i}} \right\|}}}}{{{{\left\| {{{\bf{q}}_l}\left[ n \right] - {{\bf{w}}_i}} \right\|}^2}}}} \right)}^2} + {{\left( {\frac{{2\pi }}{\lambda }\frac{{{{\left( {{x_l}\left[ n \right] - {x_i}} \right)}^2}}}{{{{\left\| {{{\bf{q}}_l}\left[ n \right] - {{\bf{w}}_i}} \right\|}^2}}}} \right)}^2} + {{\left( {\frac{{\left( {{x_l}\left[ n \right] - {x_i}} \right)\left( {{y_l}\left[ n \right] - {y_i}} \right)}}{{{{\left\| {{{\bf{q}}_l}\left[ n \right] - {{\bf{w}}_i}} \right\|}^3}}}} \right)}^2} + } \right. \\
 & {\left( {\frac{{2\pi }}{\lambda }\frac{{\left( {{x_l}\left[ n \right] - {x_i}} \right)\left( {{y_l}\left[ n \right] - {y_i}} \right)}}{{{{\left\| {{{\bf{q}}_l}\left[ n \right] - {{\bf{w}}_i}} \right\|}^2}}}} \right)^2}+{\left( {\frac{{\left( {{x_l}\left[ n \right] - {x_i}} \right)\left( {{z_l}\left[ n \right] - {z_i}} \right)}}{{{{\left\| {{{\bf{q}}_l}\left[ n \right] - {{\bf{w}}_i}} \right\|}^3}}}} \right)^2}  + {\left( {\frac{{2\pi }}{\lambda }\frac{{\left( {{x_l}\left[ n \right] - {x_i}} \right)\left( {{z_l}\left[ n \right] - {z_i}} \right)}}{{{{\left\| {{{\bf{q}}_l}\left[ n \right] - {{\bf{w}}_i}} \right\|}^2}}}} \right)^2} + \\
 &{\left( {\frac{{\left( {{x_l}\left[ n \right] - {x_i}} \right)\left( {{y_l}\left[ n \right] - {y_i}} \right)}}{{{{\left\| {{{\bf{q}}_l}\left[ n \right] - {{\bf{w}}_i}} \right\|}^3}}}} \right)^2} + {\left( {\frac{{2\pi }}{\lambda }\frac{{\left( {{x_l}\left[ n \right] - {x_i}} \right)\left( {{y_l}\left[ n \right] - {y_i}} \right)}}{{{{\left\| {{{\bf{q}}_l}\left[ n \right] - {{\bf{w}}_i}} \right\|}^2}}}} \right)^2}+{\left( {\frac{{\left\| {{{\bf{q}}_l}\left[ n \right] - {{\bf{w}}_i}} \right\| - \frac{{{{\left( {{y_l}\left[ n \right] - {y_i}} \right)}^2}}}{{\left\| {{{\bf{q}}_l}\left[ n \right] - {{\bf{w}}_i}} \right\|}}}}{{{{\left\| {{{\bf{q}}_l}\left[ n \right] - {{\bf{w}}_i}} \right\|}^2}}}} \right)^2} + \\
 &{\left( {\frac{{2\pi }}{\lambda }\frac{{{{\left( {{y_l}\left[ n \right] - {y_i}} \right)}^2}}}{{{{\left\| {{{\bf{q}}_l}\left[ n \right] - {{\bf{w}}_i}} \right\|}^2}}}} \right)^2}+ {\left( {  \frac{{\left( {{y_l}\left[ n \right] - {y_i}} \right)\left( {{z_l}\left[ n \right] - {z_i}} \right)}}{{{{\left\| {{{\bf{q}}_l}\left[ n \right] - {{\bf{w}}_i}} \right\|}^3}}}} \right)^2} + {\left( {\frac{{2\pi }}{\lambda }\frac{{\left( {{y_l}\left[ n \right] - {y_i}} \right)\left( {{z_l}\left[ n \right] - {z_i}} \right)}}{{{{\left\| {{{\bf{q}}_l}\left[ n \right] - {{\bf{w}}_i}} \right\|}^2}}}} \right)^2}+{\left( {  \frac{{\left( {{x_l}\left[ n \right] - {x_i}} \right)\left( {{z_l}\left[ n \right] - {z_i}} \right)}}{{{{\left\| {{{\bf{q}}_l}\left[ n \right] - {{\bf{w}}_i}} \right\|}^3}}}} \right)^2} +  \\
 &{\left( {\frac{{2\pi }}{\lambda }\frac{{\left( {{x_l}\left[ n \right] - {x_i}} \right)\left( {{z_l}\left[ n \right] - {z_i}} \right)}}{{{{\left\| {{{\bf{q}}_l}\left[ n \right] - {{\bf{w}}_i}} \right\|}^2}}}} \right)^2}+{\left( {  \frac{{\left( {{y_l}\left[ n \right] - {y_i}} \right)\left( {{z_l}\left[ n \right] - {z_i}} \right)}}{{{{\left\| {{{\bf{q}}_l}\left[ n \right] - {{\bf{w}}_i}} \right\|}^3}}}} \right)^2} +{\left( {\frac{{2\pi }}{\lambda }\frac{{\left( {{y_l}\left[ n \right] - {y_i}} \right)\left( {{z_l}\left[ n \right] - {z_i}} \right)}}{{{{\left\| {{{\bf{q}}_l}\left[ n \right] - {{\bf{w}}_i}} \right\|}^2}}}} \right)^2}+\\
 &\left. {\left( {\frac{{\left\| {{{\bf{q}}_l}\left[ n \right] - {{\bf{w}}_i}} \right\| - \frac{{{{\left( {{z_l}\left[ n \right] - {z_i}} \right)}^2}}}{{\left\| {{{\bf{q}}_l}\left[ n \right] - {{\bf{w}}_i}} \right\|}}}}{{{{\left\| {{{\bf{q}}_l}\left[ n \right] - {{\bf{w}}_i}} \right\|}^2}}}} \right)^2} + {\left( {\frac{{2\pi }}{\lambda }\frac{{{{\left( {{z_l}\left[ n \right] - {z_i}} \right)}^2}}}{{{{\left\| {{{\bf{q}}_l}\left[ n \right] - {{\bf{w}}_i}} \right\|}^2}}}} \right)^2} \right]= {\left( {\frac{{2\pi }}{\lambda }\sum\limits_{l' = 1,l' \ne l}^L {{\zeta _{k,i,l,l'}}\left[ n \right]} } \right)^2}\left( {\frac{{4{\pi ^2}}}{{{\lambda ^2}}} + \frac{2}{{{{\left\| {{{\bf{q}}_l}\left[ n \right] - {{\bf{w}}_i}} \right\|}^2}}}} \right).
 \end{aligned}
 \end{equation}
 \end{figure*}
  Moreover, an upper bound for the squared Frobenius norm of ${\nabla ^2}{f_{k,i,l}}\left[ n \right]$ is given by \eqref{squaredFrobeniusNorm}. It is worth noting that ${\left\| {{{\bf{q}}_l}\left[ n \right] - {{\bf{w}}_i}} \right\|^2} \ge {H^2}$, and since $\frac{{4{\pi ^2}}}{{{\lambda ^2}}} \gg \frac{2}{{{H^2}}}$, it follows that
 \begin{equation}
 \left\| {{\nabla ^2}{f_{k,i,l}}\left[ n \right]} \right\|_F^2 \le {\left( {\frac{{4{\pi ^2}}}{{{\lambda ^2}}}\sum\limits_{l' = 1,l' \ne l}^L {{\zeta _{k,i,l,l'}}\left[ n \right]} } \right)^2}.
 \end{equation}
 With the properties $\left\| {{\nabla ^2}{f_{k,i,l}}\left[ n \right]} \right\|_2^2 \le \left\| {{\nabla ^2}{f_{k,i,l}}\left[ n \right]} \right\|_F^2$ and ${\nabla ^2}{f_{k,i,l}}\left[ n \right] \preceq {\left\| {{\nabla ^2}{f_{k,i,l}}\left[ n \right]} \right\|_2}{\bf{I}}$, we have ${\nabla ^2}{f_{k,i,l}}\left[ n \right] \preceq {\omega _{k,i,l}}\left[ n \right]{\bf{I}}$, where ${\omega _{k,i,l}}\left[ n \right] \triangleq \frac{{4{\pi ^2}}}{{{\lambda ^2}}}\sum\nolimits_{l' = 1,l' \ne l}^L {{\zeta _{k,i,l,l'}}\left[ n \right]} $. Then, based on the Taylor's theorem \cite{sun2016majorization}, the quadratic surrogate functions serving as the lower and upper bounds of ${f_{k,i,l}}\left[ n \right]$, i.e., \eqref{lowerBoundfkim} and \eqref{upperBoundfkim}, can be respectively obtained. This thus completes the proof of Lemma \ref{quadraticSurrogatelemma}.

\end{appendices}


\bibliographystyle{IEEEtran}
\bibliography{refNearFieldUAVMA}

\end{document}